\newcommand{\real}{\mathbb R}
\def\Cov{\mathrm{Cov}}
\def\E{\mathrm{E}}
\def\Prob{\mathbb{P}}
\def\swap{\mathrm{swap}}
\def\H{\mathcal{H}}
\def\A{\mathcal{A}}
\def\V{\mathcal{V}}
\def\R{\mathcal{R}}
\def\F{\mathcal{F}}
\def\indi{\mathbbm{1}}
\def\fdp{\textsc{FDP}}
\def\wfdp{\textsc{wFDP}}
\def\U{\mathcal{U}}
\DeclareMathOperator*{\argmax}{arg\,max}
\DeclareMathOperator*{\argmin}{arg\,min}
\newcommand{\norm}[1]{\left\lVert#1\right\rVert}
\newcommand{\snorm}[1]{\|#1\|}
\newtheorem{theorem}{Theorem}
\newtheorem{lemma}[theorem]{Lemma}
\newtheorem{remark}[theorem]{Remark}
\newtheorem{definition}[theorem]{Definition}
\title{Controlling Costs: Feature Selection on a Budget}
\author{Guo Yu\thanks{Department of Statistics and Applied Probability, University of California Santa Barbara,  Santa Barbara, California, 93110, \href{mailto:guoyu@ucsb.edu}{guoyu@ucsb.edu} }  \and Daniela Witten\thanks{Department of Statistics and Biostatistics, University of Washington, Seattle, Washington, 98195, \href{mailto:dwitten@uw.edu}{dwitten@uw.edu} } \and Jacob Bien\thanks{Department of Data Sciences and Operations, Marshall School of Business, University of Southern California, Los Angeles, CA 90089, \href{mailto:jbien@usc.edu}{jbien@usc.edu}}}
\date{}
\begin{document}
\maketitle

\begin{abstract}
  The traditional framework for feature selection treats all features as costing the same amount. However, in reality, a scientist often has considerable discretion regarding which variables to measure, and the decision involves a tradeoff between model accuracy and cost (where cost can refer to money, time, difficulty, or intrusiveness). 
  In particular, unnecessarily including an expensive feature in a model is worse than unnecessarily including a cheap feature. We propose a procedure, which we call cheap knockoffs,
  for performing feature selection in a cost-conscious manner. 
  The key idea behind our method is to force higher cost features to compete with more knockoffs than cheaper features. We derive an upper bound on the weighted false discovery proportion associated with this procedure, which corresponds to the fraction of the feature cost that is wasted on unimportant features.  We prove that this bound holds simultaneously with high probability over a path of selected variable sets of increasing size. A user may thus select a set of features based, for example, on the overall budget, while knowing that no more than a particular fraction of feature cost is wasted.
  We investigate, through simulation and a biomedical application, the practical importance of incorporating cost considerations into the feature selection process.
\end{abstract}

\section{Introduction} \label{sec:introduction}
The traditional framework for feature selection ignores the fact that, in practice,  different features may have different costs.
In reality, practitioners must balance the opposing demands of model accuracy and budget considerations.
For example, as we will see in Section \ref{sec:data}, in medical diagnosis, doctors often have a wide range of options for what features to measure: a laboratory result may provide highly relevant information yet is expensive in terms of money, time, and the burden on patients; a simple questionnaire or even demographic information may be less informative but incurs lower costs. 
When a questionnaire would suffice for forming an accurate diagnosis, performing a laboratory examination would be practically misguided.
Likewise, how should we decide whether to sequence a patient's entire genome or simply to conduct some cheap lab tests?
This same challenge appears in other domains. For example,
to determine the veracity of an online news article, do we require high-quality features based on an expert's reading, or do features derived from natural language processing suffice?

Consider the response of interest $Y$ and a set of features $X_1, \ldots, X_p$, where for each feature $X_j$, there is an associated cost $\omega_j > 0$. In this paper, we consider a very general model where
$Y | X_1, \ldots, X_p$
follows an arbitrary distribution, and we assume that the joint distribution of $X_1, \ldots, X_p$ is known.
Let $\H_0$ be the set of irrelevant features, i.e., $j \in \H_0$ if and only if  $X_j$ is independent of $Y$ conditional on the other variables $\{X_k: k\neq j\}$ \citep[Definition 1 in][]{candes2018panning}.
Given a set of selected features $\R \subseteq \{1, \ldots, p\}$, the false discovery proportion ($\fdp$) is defined as $|\R \cap \H_0| / |\R|$, i.e., it is the fraction of selected features that are unnecessarily included.

\citet{barber2015} proposed the knockoff filter, a feature selection procedure that provably controls the false discovery rate, defined as $\E(\fdp)$. 
For each feature, they construct a knockoff feature, i.e., a carefully constructed fake copy of that feature. 
A feature is then only selected if it shows considerably more association with the response than its knockoff counterpart. 
\citet{katsevich2018towards} showed that one can 
directly upper-bound the false discovery proportion, with high probability, simultaneously for an entire path of selected models, $\R_1, \ldots, \R_p$, where $\R_k \subseteq \R_{k + 1}$ for all $k$.

However, the false discovery proportion and the false discovery rate put all features on an equal footing, and do not consider their costs $\omega_1, \ldots, \omega_p$.
To overcome this shortcoming, the weighted false discovery proportion ($\wfdp$; \citealt{benjamini1997multiple}) is defined as $ \wfdp (\R) = C(\R \cap \H_0) / C(\R)$, i.e., the fraction of the total cost that is wasted, where $C(\A) = \sum_{j \in \A} \omega_j$ is the cost of measuring the features in $\A$.

The weighted false discovery proportion and weighted false discovery rate are not new \citep{benjamini1997multiple,benjamini2007false}, and the Benjamini-Hochberg procedure  \citep{benjamini1995controlling} has been generalized to the weighted false discovery rate setting.
A related criterion is the penalty-weighted false discovery rate \citep{ramdas2019unified}, which can be controlled with the p-filter.
However, the aforementioned procedures only provably control the corresponding criteria under restrictive dependence assumptions on the $p$-values \citep{benjamini2001}. 
Under arbitrary dependence, the reshaping process \citep{benjamini2001, blanchard2008two, ramdas2019unified} needs to be applied, which can greatly reduce power.
\citet{basu2018weighted} proposed a procedure that has asymptotic control of a related quantity,
namely $\E[C(\R \cap \H_0)] / \E[C(\R)]$, in a mixture model under certain regularity conditions.

In this work, we adapt the ideas of knockoffs \citep{barber2015} and simultaneous inference \citep{goeman2011multiple,katsevich2018towards} to the setting where features have costs. 
The key to our method, which we call {\em cheap knockoffs}, is to construct multiple knockoffs for each feature, with more expensive features having more knockoffs.  
A feature is selected only if it beats all of its knockoff counterparts; thus, costlier features have more competition. 
This procedure yields a path of selected feature sets $\R_1, \ldots, \R_p$ for which $\wfdp (\R_k)$ is bounded by a certain computable quantity with high probability, regardless of how $k$ is chosen. 
Unlike existing work on weighted false discovery rate control \citep{benjamini1997multiple,benjamini2007false, ramdas2019unified}, our method provably bounds the weighted false discovery proportion under arbitrary dependence among features.
\citet{yu2021high} recently proposed a predictive modeling method in high-dimensional cost-constrained linear regression problems. Different from their focus which is on good prediction performance under budget constraints, our method aims at recovering the true set of features (as defined in $\H_0^C$) with $\wfdp$ control.

\section{Cheap knockoffs}
\subsection{A review of model-X knockoffs and simultaneous inference} \label{sec:standard}
Our method is based on the model-X knockoff procedure \citep{candes2018panning} and its multiple knockoff extension \citep{2018arXiv181011378R}, which provably control the false discovery rate for arbitrary sample size $n$ and number of features $p$. 
For simplicity, we focus on the following linear model setting 
\begin{align}
  \E \left[Y | X_1, \ldots, X_p \right] = \sum_{j = 1}^p \beta_j X_j,
  \quad \left(X_1, \ldots, X_p \right)^T \sim N(\mathbf{0}, \Sigma).
  \label{eq:model}
\end{align}

We start by briefly reviewing the model-X knockoff approach in the simultaneous inference setting, applied specifically in the linear model \eqref{eq:model}. 
Throughout this paper, we denote $\mathbf{X} \in \real^{n \times p}$ as a data matrix, and $\mathbf{y} \in \real^n$ as a response vector, where $(\mathbf{X}_{i1}, \ldots, \mathbf{X}_{ip}, \mathbf{y}_i) \in \real^{p} \times \real$ are independently and identically distributed as $(X_1, \ldots, X_p, Y)$ for $i = 1, \ldots, n$.
\begin{enumerate}
  \item For each variable $X_j$, construct a knockoff variable $\tilde{X}_j$ that satisfies:
    \begin{enumerate}
      \item $\E(\tilde{X}_j) = \E(X_j)$;
      \item $\Cov(\tilde{X}_j, \tilde{X}_k) = \Cov(X_j, X_k)$ for all $k$;
      \item $\Cov(\tilde{X}_j, X_k) = \Cov(X_j, X_k) - s_j \indi \{j = k\}$ for some $s_j \geq 0$.     
    \end{enumerate}
    The knockoff variables $\tilde{X} = (\tilde{X}_1, \ldots, \tilde{X}_p)$ are constructed to resemble $X$ without any knowledge of the response $Y$. We denote $\tilde{\mathbf{X}} \in \real^{n \times p}$ as the constructed knockoff matrix of $\mathbf{X}$ in a way that $(\tilde{\mathbf{X}}_{i1}, \ldots, \tilde{\mathbf{X}}_{ip})$ is a knockoff of $(\mathbf{X}_{i1}, \ldots, \mathbf{X}_{ip})$ for $i = 1, \ldots, n$.

  \item For each $j \in \{1, \ldots, p\}$, compute statistics $T_j$ and $\tilde{T}_j$ for the variables $X_j$ and $\tilde{X}_j$, respectively. For example, these could be the absolute values of the coefficients of a lasso regression \citep{tibshirani1996regression} on the augmented design matrix $\mathbf{Z} = [\mathbf{X}, \tilde{\mathbf{X}}] \in \real^{n \times 2p}$: 
    \begin{align}
      \hat{\theta}(\lambda) = \argmin_{\theta \in \real^{2p}} \left( \frac{1}{2} \norm{\mathbf{y} - \mathbf{Z} \theta}_2^2 + \lambda \norm{\theta}_1 \right),
      \label{eq:lasso}
    \end{align}
with 
$T_j = |\hat{\theta}(\lambda)_j|$ and $\tilde{T}_j = |\hat{\theta}(\lambda)_{j + p}|$.
    The value of $\lambda$ can be fixed in advance, or selected using cross-validation.
    The knockoff statistics are then defined as
    $W_j = T_j - \tilde{T}_{j}$.
    \citet{barber2015} and \citet{candes2018panning} discuss other choices of $T_j$'s and $W_j$'s.
    Intuitively, a large value of $W_j$ indicates that $X_j$ is a genuine signal variable, i.e., the distribution of $Y$ depends on $X_j$, whereas a small or negative value of $W_j$ indicates that $X_j$ may be irrelevant.

  \item For any ordering of variables $\sigma(1), \ldots, \sigma(p)$, e.g., $|W_{\sigma(1)}| \geq |W_{\sigma(2)}| \geq \ldots \geq |W_{\sigma(p)}|$, report the sets of selected variables $\R_{k} = \left\{ \sigma(j): \sigma(j) \leq \sigma(k), W_{\sigma(j)} > 0 \right\}$, for $k \in \{ 1, \ldots, p\}$.
\end{enumerate}

\citet{katsevich2018towards} work within the simultaneous inference framework \citep{goeman2011multiple}, in which a practitioner wishes to obtain a final set of selected variables with false discovery proportion control when choosing among $\{\R_k, k = 1, \ldots, p\}$.
To allow for such behavior, \citet{katsevich2018towards} form a computable upper bound $\U_k$ such that $\fdp(\R_k)\le \U_k$ holds simultaneously over all $k$ with some known probability.

\subsection{Multiple knockoffs based on cost} \label{sec:mknockoffs}
The knockoff procedure described in the previous section constructs a single knockoff variable for each feature, and then selects features based solely on the values of $W_1, \ldots, W_p$. \citet{barber2015} and \citet{candes2018panning} discuss the possibility of constructing $K$ knockoffs per feature for some value $K>1$ with the goal of achieving higher statistical
power and stability.
This has been pursued in \citet{2018arXiv181011378R} and \citet{emery2019multiple}.

We make a simple yet crucial modification to the multiple knockoffs idea, allowing different features to have different numbers of knockoffs, so that an expensive irrelevant feature will have a lower chance of entering the model than a cheap irrelevant feature.
Assume that the feature costs $\omega_1, \ldots, \omega_p$ are integers with $\omega_j \geq 2$. 
We construct $\omega_j - 1$ knockoff variables for each original variable $X_j$.
If $X_j$ is irrelevant, i.e., $j \in \H_0$, then we expect it to be selected with probability $1/\omega_j$.
We also incorporate costs into the construction of the sequence of selected feature sets $\R_k$.
The cheap knockoff procedure generalizes the multiple knockoff procedure of \citet{2018arXiv181011378R} to the cost-conscious setting:
\begin{enumerate}
  \item For each variable $X_j$ with cost $\omega_j$, denote $\tilde{X}_j^{(1)} = X_j$ and construct the knockoff variables $\tilde{X}_j^{(2)}, \tilde{X}_j^{(3)}, \dots, \tilde{X}_j^{(\omega_j)}$ such that:
    \begin{enumerate}
      \item $\E(\tilde{X}_j^{(\ell)}) = \E(X_j)$ for $\ell \in \{2, \ldots, \omega_j\}$.
      \item $\Cov( \tilde{X}_j^{(\ell)}, \tilde{X}_k^{(m)} ) = \Cov( X_j, X_k ) - s_j \indi\{j = k\} \indi\{ \ell \neq m \}$ for all $\ell \in \{1, \ldots, \omega_j\}$, $m \in \{ 1, \ldots, \omega_k \}$, $j, k \in \{1, \ldots, p \}$, and some constant $s_j \geq 0$.
    \end{enumerate}
    We denote $\tilde{\mathbf{X}}_j^{(\ell)} \in \real^{n}$ as the constructed knockoff variables of $\mathbf{X}_j$, such that $(\tilde{\mathbf{X}}_{ij}^{(\ell)})^{\ell = 1, \ldots, \omega_j}_{j = 1, \ldots, p}$ satisfies the condition above for $(\mathbf{X}_{ij})_{j = 1, \ldots, p}$ for $i = 1, \ldots, n$.

  \item For each $j \in \{1, \ldots, p\}$, compute the statistics $T_j^{(1)}$ (corresponding to the original variable) and $T_j^{(2)}, \dots, T_j^{(\omega_j)}$ (corresponding to the $\omega_j - 1$ knockoff variables).
    For example, these could be the absolute values of the coefficients of the following lasso regression: 
    \begin{align}
      \{\hat{\theta}_{j}^{(\ell)}(\lambda)\}_{j \leq p, \ell \leq \omega_j} = \argmin_{\theta_j^{(\ell)}: j \leq p, \ell \leq \omega_j} \left( \frac{1}{2} \norm{\mathbf{y} - \sum_{j = 1}^p \sum_{\ell = 1}^{\omega_j} \tilde{\mathbf{X}}_j^{(\ell)} \theta_{j}^{(\ell)}}_2^2 + \lambda \sum_{j = 1}^p \sum_{\ell = 1}^{\omega_j} |\theta_j^{(\ell)}| \right),
      \label{eq:lassom}
    \end{align}
    with $T_{j}^{(\ell)} = |\hat{\theta}_j^{(\ell)}(\lambda)|$. The value of $\lambda$ in \eqref{eq:lassom} can be selected using cross-validation.
    We define
    \begin{align}
      \kappa_j = \argmax_{1 \leq \ell \leq \omega_j} T_j^{(\ell)}.
      \label{eq:Wj}
    \end{align}
  \item For any ordering of variables $\sigma(1), \ldots, \sigma(p)$, report the sets of selected variables $\R_{k} = \left\{ \sigma(j): \sigma(j) \leq \sigma(k), \kappa_{\sigma(j)} = 1 \right\}$, for $k \in \{ 1, \ldots, p\}$.
\end{enumerate}

In Step 1, various methods are available for constructing multiple knockoffs given that the distribution of $X$ is known \citep[see, e.g.,][]{candes2018panning, 2018arXiv181011378R}.
The computation of $\kappa_j$ in Step 2 involves the $\omega_j$ statistics $T_j^{(1)}, \ldots, T_j^{(\omega_j)}$; 
$\kappa_j = 1$ indicates that the original variable beats all of its $\omega_j - 1$ knockoff copies.
We show in the supplementary material that the probability of this occurring for an irrelevant feature is inversely proportional to the feature's cost. This is the key property used to show the simultaneous control of the weighted false discovery proportion in the next section.

In principle, any ordering of variables can be used to obtain $\R_k$. In simulations, we consider a specific ordering such that $\tau_{\sigma(1)} \geq \tau_{\sigma(2)} \ldots \geq \tau_{\sigma(p)}$,
where $\tau_j = 2\omega_j^{-1}\{ T^{(\kappa_j)}_j - \max_{\ell \neq \kappa_j} T_j^{(\ell)} \}$.
One reason for this specific choice of $\tau_j$ is that when $\omega_1 = \ldots = \omega_p = 2$,
the above procedure is exactly the same as the standard knockoff procedure reviewed in Section \ref{sec:standard}.
In particular, $W_j > 0$ if and only if $\kappa_j = 1$, and $|W_j| = \tau_j$.
Moreover, all else being equal, we want to make use of cheap features over expensive features.
For this reason, we set $\tau_j$ to be inversely proportional to the feature cost.

\subsection{Simultaneous control of the weighted false discovery proportion} \label{sec:theory}
Having constructed a cost-conscious path of selected variable sets $\R_1, \ldots, \R_p$, we next provide a simultaneous high-probability bound on the weighted false discovery proportion along this path. The next theorem and the remark that follows establish that the computable quantities $\bar{\U}(\R_1, c), \ldots, \bar{\U}(\R_p, c)$, defined below in \eqref{eq:ubar}, simultaneously upper bound $\wfdp (\R_1), \ldots, \wfdp(\R_p)$ with a known probability. 
This means that for any choice of $k$, with high probability our selected feature set is not too wasteful (in terms of the fraction of cost spent on irrelevant features).
\begin{theorem} \label{thm:spotting}
  For any $\alpha \in (0, 1)$, we have 
  \begin{align}
    \Prob \left\{ \wfdp\left(\R_k\right) \leq \U\left(\R_k, c\right)  \text{ \normalfont    for all } k\right\} \geq 1 - \alpha,
    \label{eq:spotting}
  \end{align}
  where for any constant $c > 0$,
  \begin{align}
    \U(\R_k, c) = -\log \alpha \left[\frac{1  + c\sum_{j = 1}^k  \indi\left \{ j \notin \R_k \right \}}{ \left(\sum_{j = 1}^k \omega_j \indi \left\{ j \in \R_k \right\} \right) \vee 1} \right] \left[\max_{k \in \H_0} \frac{\omega_k}{\log \left\{ \omega_k - \left( \omega_k - 1 \right) \alpha^{c} \right\}}\right].
    \label{eq:wfdpbar}
  \end{align}
\end{theorem}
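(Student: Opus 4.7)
\emph{Approach.} The plan is to adapt the supermartingale argument of \citet{katsevich2018towards} to the multiple-knockoff setting, tuning the exponential moment by feature cost. The two ingredients are (i) the null distribution of the $\kappa_j$'s supplied by a cheap-knockoff exchangeability lemma (promised in the supplementary material), and (ii) a moment calculation that produces exactly the constant $\beta := \max_{j \in \H_0} \omega_j / \log\{\omega_j - (\omega_j - 1)\alpha^c\}$ appearing in \eqref{eq:wfdpbar}.

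First, I would invoke exchangeability to establish that, for each $j \in \H_0$, $V_j := \indi\{\kappa_j = 1\}$ is Bernoulli$(1/\omega_j)$ conditional on the unordered multiset $\{T_j^{(\ell)}\}_{\ell = 1}^{\omega_j}$, jointly independent across null $j$. Because $\tau_j$ -- and hence the ordering $\sigma$ -- depends on the $T_j^{(\ell)}$'s only through such symmetric functions, this distributional statement is preserved after conditioning on $\sigma$. A direct computation then gives $\E[\exp(a V_j)\,\alpha^{c(1 - V_j)}] = \omega_j^{-1} e^a + (1 - \omega_j^{-1})\alpha^c$, which is $\leq 1$ iff $a \leq \log\{\omega_j - (\omega_j - 1)\alpha^c\}$; setting $a = \omega_j / \beta$ makes the bound hold simultaneously for every $j \in \H_0$ by definition of $\beta$.

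Next, after relabeling so that the ordering is $1, 2, \ldots, p$, I would walk forward along the path and define
\begin{equation*}
  S_k = \sum_{j = 1}^k \omega_j V_j \indi\{j \in \H_0\}, \qquad N_k = \sum_{j = 1}^k (1 - V_j), \qquad M_k = \exp(S_k / \beta) \cdot \alpha^{c N_k},
\end{equation*}
with filtration $\mathcal{F}_k$ generated by the symmetric statistics and $V_1, \ldots, V_k$. For a null position $k + 1$, the moment bound gives $\E[M_{k+1}/M_k \mid \mathcal{F}_k] \leq 1$; for a non-null position, $M_{k+1}/M_k = \alpha^{c(1 - V_{k+1})} \leq 1$ deterministically since $\alpha \in (0, 1)$. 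Hence $(M_k)$ is a nonnegative supermartingale with $M_0 = 1$, and Ville's maximal inequality yields $\Prob(\sup_k M_k \geq 1/\alpha) \leq \alpha$. On the complementary event, taking logs gives $S_k \leq -\beta \log \alpha \cdot (1 + c N_k)$ for every $k$ at once; dividing through by $C(\R_k) \vee 1$ and identifying $S_k = C(\R_k \cap \H_0)$ and $N_k = \sum_{j=1}^k \indi\{j \notin \R_k\}$ produces precisely \eqref{eq:wfdpbar}.

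The main obstacle is step (i): establishing the joint conditional uniformity of the null $\kappa_j$'s. This needs a multi-knockoff exchangeability argument at the level of the augmented design $[\mathbf{X}, \tilde{\mathbf{X}}^{(2)}, \ldots, \tilde{\mathbf{X}}^{(\max_j \omega_j)}]$, showing that the joint distribution of $\mathbf{y}$ together with the full collection of statistics is invariant under permutations of $\mathbf{X}_j, \tilde{\mathbf{X}}_j^{(2)}, \ldots, \tilde{\mathbf{X}}_j^{(\omega_j)}$ for each $j \in \H_0$ -- the multi-knockoff analog of Lemma 3.2 of \citet{candes2018panning}. Once that exchangeability is in hand, the supermartingale construction and algebra are routine.
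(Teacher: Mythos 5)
Your proposal is correct and follows essentially the same route as the paper's proof: the conditional uniformity and independence of the null $\kappa_j$'s (established in the appendix via swap-exchangeability of valid $\omega$-knockoffs), the same exponential supermartingale with the same moment calculation yielding the constant $\max_{j \in \H_0} \omega_j / \log\{\omega_j - (\omega_j - 1)\alpha^c\}$, and Ville's maximal inequality. The only cosmetic differences are that you keep the non-null count inside the supermartingale and handle those increments deterministically, whereas the paper first discards them via $\sum_j \indi\{\kappa_j > 1\} \geq \sum_{j \in \H_0} \indi\{\kappa_j > 1\}$, and that you are somewhat more explicit than the paper about why the data-dependent ordering (a function of the order statistics only) does not disturb the null distribution of the $\kappa_j$'s.
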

For the standard knockoff procedure described in Section \ref{sec:standard}, we have $\omega_1 = \ldots = \omega_p = 2$. In that case, with $c = 1$, \eqref{eq:wfdpbar} reduces exactly to the bound from applying Theorem 2 of \citet{katsevich2018towards} to the Selective and Adaptive SeqStep procedure \citep{barber2015} with $p_\ast = \lambda = 1/2$.

As mentioned in Section \ref{sec:standard}, our procedure can be generalized to any known distribution of $X$ and any unknown conditional distribution of $Y$ given $X$. For example, in the binary classification data example in Section \ref{sec:data}, we consider the statistics $\{T_j^{(\ell)}\}$ derived from $\ell_1$-penalized logistic regression. Following the arguments \citet{candes2018panning}, we can show that Theorem \ref{thm:spotting} also holds for this choice of $\{T_j^{(\ell)}\}$.

\begin{remark}
  The weighted false discovery proportion upper bound $\U(\R_k, c)$ depends on the unknown set $\H_0$. In practice, we can use an upper bound
  \begin{align}
    \bar{\U}(\R_k, c) = -\log \alpha \left[\frac{1  + c\sum_{j = 1}^k  \indi\left \{ j \notin \R_k \right \}}{\left(\sum_{j = 1}^k \omega_j \indi \left\{ j \in \R_k \right\} \right) \vee 1} \right] \left[\max_{k} \frac{\omega_k}{\log \left\{ \omega_k - \left( \omega_k - 1 \right) \alpha^{c} \right\}}\right].
    \label{eq:ubar}
  \end{align}
  Moreover, if an estimated set $\hat{\H}_0$ satisfying $\H_0 \subseteq \hat{\H}_0$ is available, then \eqref{eq:wfdpbar} with the maximum taken over $\hat{\H}_0$ gives a tighter bound in \eqref{eq:spotting}.
\end{remark}

Our procedure yields a sequence of sets $\R_k$ of selected variables, and the bound in \eqref{eq:spotting} gives a specific description of the tradeoff between capturing enough of the signal variables and incurring too much cost. The simultaneous nature of the bound means that $\wfdp(\R_k)$ is controlled regardless of the approach used to select $k$: the choice of $k$ can depend on the size of $\R_k$, the cost of $\R_k$, or in fact any function of the data.

\section{Simulation studies} \label{sec:simulation}
We now investigate the feature selection performance of cheap knockoffs in simulation. 
We set $n = 200$ and $p = 30$. 
Each element of the design matrix $\mathbf{X} \in \real^{n \times p}$ is independent and identically distributed as $N(0, 1)$. 
The response is generated from the linear model \eqref{eq:model} with Gaussian errors $\varepsilon \sim N(0, \sigma^2)$ and $\sigma^2=(4n)^{-1}\snorm{\mathbf{X} \beta}_2^2$.
We let $\beta_{1} = \ldots = \beta_{10} = 2$, and $\beta_j = 0$ for $j > 10$.
We set the first half of the relevant features to be expensive and the second half to be cheap, i.e., $\omega_{1} = \ldots = \omega_5 = 6$, and $\omega_{6} = \ldots = \omega_{10} = 2$. 
For the irrelevant features, i.e., for any $j > 10$, 
we set $\Prob(\omega_j = 6) = \gamma$ and $\Prob(\omega_j = 2) = 1 - \gamma$, where $\gamma \in \{0, 0.25, 0.5, 0.75, 1\}$.

We construct multiple knockoff variables using entropy maximization \citep{2018arXiv181011378R}, and we compute the statistics $T_j^{(\ell)}$ as the absolute value of the lasso coefficient estimates in \eqref{eq:lassom},
with the tuning parameter selected using cross-validation.
In Appendix \ref{app:time} we report the wall-clock running time of cheap knockoffs in the numerical studies.
We find that the majority of computation is spent on generating multiple knockoffs, which is challenging when $p$ is large and (or) the feature costs are large (after dividing by their greatest common factor).
In such cases, alternative construction methods could be used. For example, \citet[Appendix A.1.][]{2018arXiv181011378R} show that an equicorrelation construction has a closed form expression, which is particularly favorable in computation since it does not depend on the number of multiple knockoffs (and equivalently, the feature costs). 

We first verify the bound in Theorem \ref{thm:spotting} and compare the performance of cheap knockoffs to \citet{katsevich2018towards}, which ignores feature costs.
In particular, by carrying out Steps 1-3 in Section \ref{sec:standard} with $\omega_1 = \ldots = \omega_p = 2$ in \eqref{eq:ubar}, the bound in \eqref{eq:ubar} coincides with the result in \citet{katsevich2018towards}. We denote this approach as \citet{katsevich2018towards}. For both methods, we take $\alpha = 0.2$ in \eqref{eq:ubar}.
In Fig.~\ref{fig:ratio} we report both the ratio $\bar{\U}(\R_k, 1)^{-1}\wfdp(\R_k)$ and the actual weighted false discovery proportion $\wfdp(\R_k)$ for each $\R_k$ for both methods in the settings where $\gamma = 0, 0.5,$ and $1$.
\begin{figure}[H]
  \centering
  \includegraphics[width=0.9\textwidth]{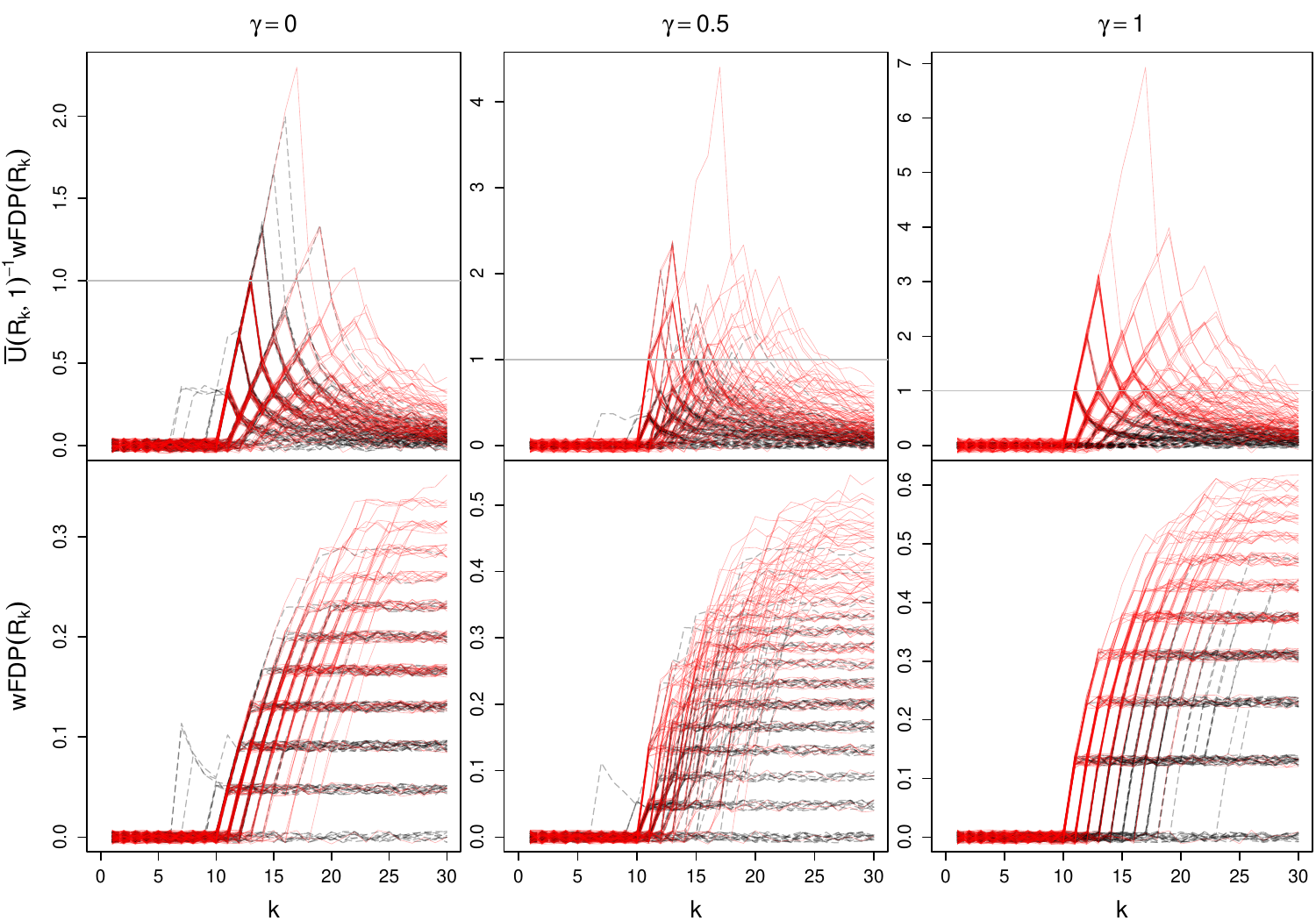}
  \caption{Each line represents one of 100 simulated datasets. Jitter is applied to ease visualization. The black dashed lines represent cheap knockoffs (our proposal) which incorporates feature costs, and the red solid lines represent \citet{katsevich2018towards} which does not make use of feature costs. Top panel: the cheap knockoffs approach controls the weighted false discovery proportion with the desired probability ($\alpha=0.2$) whereas the \citet{katsevich2018towards} procedure does not. Bottom panel: The cheap knockoffs attains lower weighted false discovery proportion than the \citet{katsevich2018towards} procedure for most values of $k$ when $\gamma$ is large.}
  \label{fig:ratio}
\end{figure}
As seen in Fig.~\ref{fig:ratio}, the ratio $\bar{\U}(\R_k, 1)^{-1}\wfdp(\R_k)$ for our cheap knockoff procedure is mostly below $1$, indicating that the bound in Theorem \ref{thm:spotting} holds.
Moreover, when $\gamma$ is large, the weighted false discovery proportion for the cheap knockoff procedure is lower than \citet{katsevich2018towards} for most values of $k$.
Table~\ref{tab:ratio} gives the estimated probability that the bound is violated, i.e., $\widehat{\Prob}(\sup_{k} \bar{\U}_k^{-1}(\R_k, 1) \wfdp(\R_k) > 1)$, for each method for $\gamma \in \{0, 0.25, 0.5, 0.75, 1\}$.
\begin{table}[H]
  \begin{center}
    \begin{tabular}{ c | c | c | c | c | c}
      \hline
      $\gamma$ & 0 & 0.25 & 0.5 & 0.75 & 1\\ \hline
      Cheap knockoffs (our proposal) & 0.08 & 0.05 & 0.08 & 0.07 & 0.04\\ \hline
      \citet{katsevich2018towards}  & 0.01 & 0.05 & 0.12 & 0.25 & 0.31\\
      \hline
    \end{tabular}
  \end{center}
  \caption{Proportion of 100 simulated datasets for which $\sup_{k} \bar{\U}_k^{-1}(\R_k, 1) \wfdp(\R_k) > 1$ is violated. Our proposed cost-conscious procedure successfully controls the probability below the $\alpha = 0.2$ level for all values of $\gamma$, while \citet{katsevich2018towards} does not control this probability when $\gamma = 0.75$ and $\gamma = 1$.}
  \label{tab:ratio}
\end{table}

We see that the \citet{katsevich2018towards} procedure which is not cost-conscious performs worse as $\gamma$ increases, that is, when irrelevant variables are more likely to be expensive. Since the method ignores cost, it may erroneously select expensive irrelevant features, leading to poor weighted false discovery proportion.

While our proposal focuses on recovering the correct set of features with simultaneous $\wfdp$ control, we show empirically that the set of features selected by cheap knockoffs usually incurs low cost without compromising prediction accuracy.
Specifically, for each set of selected variables $\R_1, \ldots, \R_p$, we compute both the root mean squared prediction error of the least squares model fit to the variables in $\R_k$, and the total cost $\sum_{j \in \R_k} \omega_j$. 
We see from Fig.~\ref{fig:pred} that for a given budget, the cheap knockoff procedure attains smaller prediction error than the procedure in \citet{katsevich2018towards}, which is not cost-conscious. In particular, the cheap knockoff procedure tends to select all five of the cheap relevant features before any expensive feature is let in the model, whereas \citet{katsevich2018towards} does not take feature cost into consideration.
For $k \geq 10$, $\R_k$ for both methods includes essentially all the relevant features, thus giving similar performance.
\begin{figure}[H]
  \centering
  \includegraphics[width=0.9\textwidth]{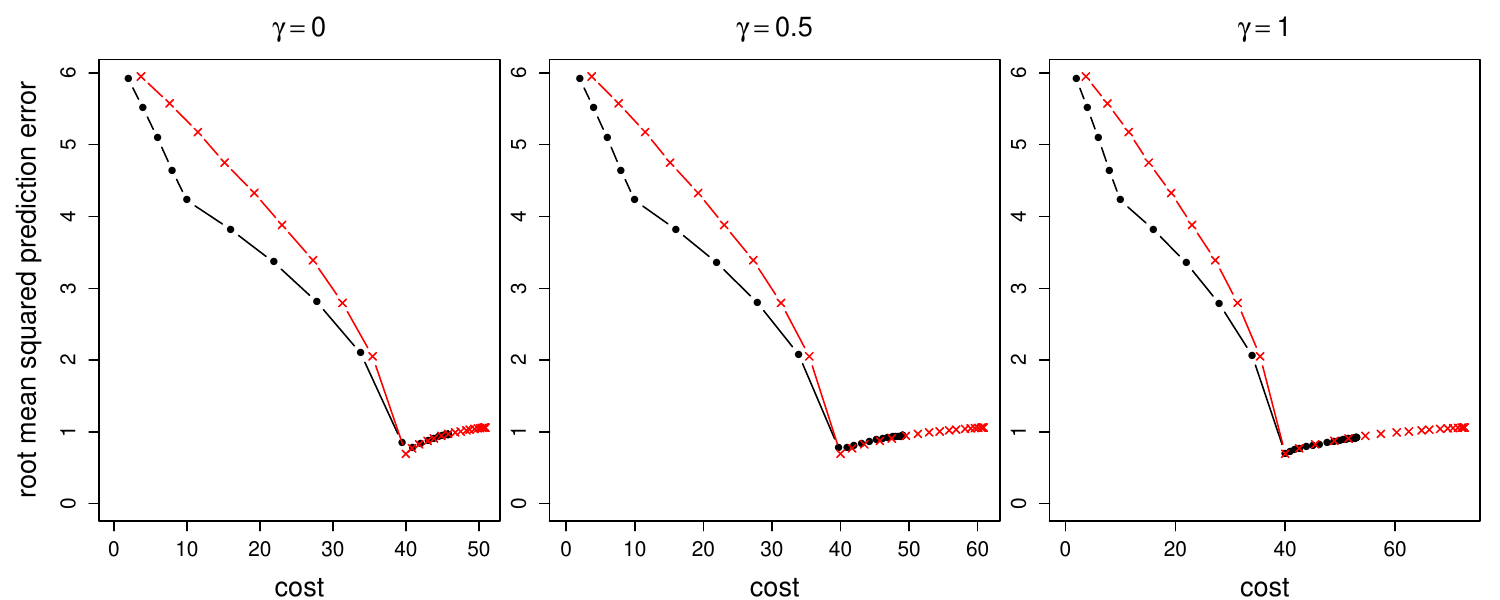}
  \caption{Tradeoff between prediction accuracy and total cost (averaged over 100 simulations). The line with dots in black represents the cheap knockoff procedure, and the line with crosses in red represents \citet{katsevich2018towards}. The cost of the model selected by our cost-conscious procedure can be much lower than that of the procedure in \citet{katsevich2018towards} without sacrificing predictive performance.}
  \label{fig:pred}
\end{figure}

\section{Data application} \label{sec:data}
To gauge the performance of cheap knockoffs in a real dataset, we consider data from the National Health and Nutrition Examination Survey (NHANES) (\citealt{nhanes},  processed in \citealt{kachuee2019opportunistic, kachuee2019cost}). 
The dataset contains 92062 samples of survey participants.
We consider 30 features, which can be broadly categorized into four types: 
demographics, questionnaire-based, examination-based, and laboratory-based.
For each feature, medical experts suggest a corresponding integer-valued cost (ranging from 2 to 9) for that feature based on ``the overall financial burden, patient privacy, and patient inconvenience'' \citep{kachuee2019cost}. A brief summary of the 30 features can be found in Table~\ref{tab:features}. Finally, each observation is associated with a label of pre-diabetes/diabetes (as one category) or normal. 
The task is to select features that are closely associated with diabetes while taking feature cost into consideration.

\begin{table}[H]
  \begin{tabular}{lll}
                      & \textbf{Examples}                                                               & \textbf{Cost}    \\
    \textbf{Demographics}           & Age; Income; Education level                                   & 2 to 4 \\
    \textbf{Questionnaire}          & Average sleep length (in hours) & 4        \\
    \textbf{Examination}            & Diastolic Blood pressure; Systolic Blood Pressure     & 5      \\
    \textbf{Laboratory}             & Cholesterol; Triglyceride; Fibrinogen                 & 9     
  \end{tabular}
  \caption{Examples of the features in the NHANES dataset}
  \label{tab:features}
\end{table}

We consider the cheap knockoff procedure as in Section \ref{sec:mknockoffs}, modified so that 
the statistics $\{T_j^{(\ell)}\}$ computed in \eqref{eq:lassom} are derived from $\ell_1$-penalized logistic regression (instead of $\ell_1$-penalized least squares). Following the arguments in \citet{candes2018panning}, we can show that Theorem \ref{thm:spotting} also holds for this choice of $\{T_j^{(\ell)}\}$.

To numerically verify Theorem \ref{thm:spotting}, we would need to know the true set of relevant variables.
We test the cheap knockoff procedure using partially-simulated data. 
To form a reasonable ground truth, we start by performing logistic regression on a random set of 72062 samples. In total, we retain 11 variables whose $p$-values are smaller than 0.01 / 30 (by Bonferroni correction).
We take these as the true set of relevant variables (see Appendix \ref{app:truth} for the list of relevant variables).
We next generate responses for the remaining 20000 samples from a logistic regression model using only these selected features. The coefficient values used correspond to those from the fitted logistic regression estimates.
We then randomly divide these 20000 samples (with simulated responses) into 50 non-overlapping sets, each containing 400 samples. On each set, we run our method to obtain a path of selected variables. Finally, we compute the estimated probability that the bound in \eqref{eq:wfdpbar} is violated, i.e., $\widehat{\Prob}(\sup_{k} \bar{\U}_k^{-1}(\R_k, 1) \wfdp(\R_k) > 1)$ for $\alpha \in \{0.05, 0.1, ..., 0.5\}$. 
We see from Table \ref{tab:ratio_data} that the estimated probability is lower than the corresponding value of $\alpha$, indicating that Theorem \ref{thm:spotting} holds for our proposed cost-conscious procedure.
\begin{table}[H]
  \begin{center}
    \begin{tabular}{ c | c | c | c | c | c | c | c | c | c | c}
      \hline
      $\alpha$ & 0.05 & 0.10 & 0.15 & 0.20 & 0.25 & 0.30 & 0.35 & 0.40 &0.45 &0.50\\ \hline
      Cheap knockoffs & 0.04 & 0.04 & 0.04 & 0.04 & 0.04 &  0.04  & 0.04   &  0.04   & 0.04 & 0.06 \\
      \hline
    \end{tabular}
  \end{center}
  \caption{Proportion of 50 data subsets for which $\sup_{k} \bar{\U}_k^{-1}(\R_k, 1) \wfdp(\R_k) > 1$ is violated. }
  \label{tab:ratio_data}
\end{table}

On each of the 50 non-overlapping data subsets, we further compute $\wfdp$ and cost for the path of selected variables $\R_k$ returned by 
cheap knockoffs and the proposal in \citet{katsevich2018towards}, which ignores feature costs.
Figure~\ref{fig:NHANES_wfdp} reports the $20$, $50$, and $80$ percentiles (over the 50 non-overlapping sets) of $\wfdp$ and cost, and shows that
our proposal effectively attains a lower $\wfdp$ and a lower cost than the proposal in \citet{katsevich2018towards} which is not cost-conscious.
\begin{figure}[H]
  \centering
  \includegraphics[width=.9\textwidth]{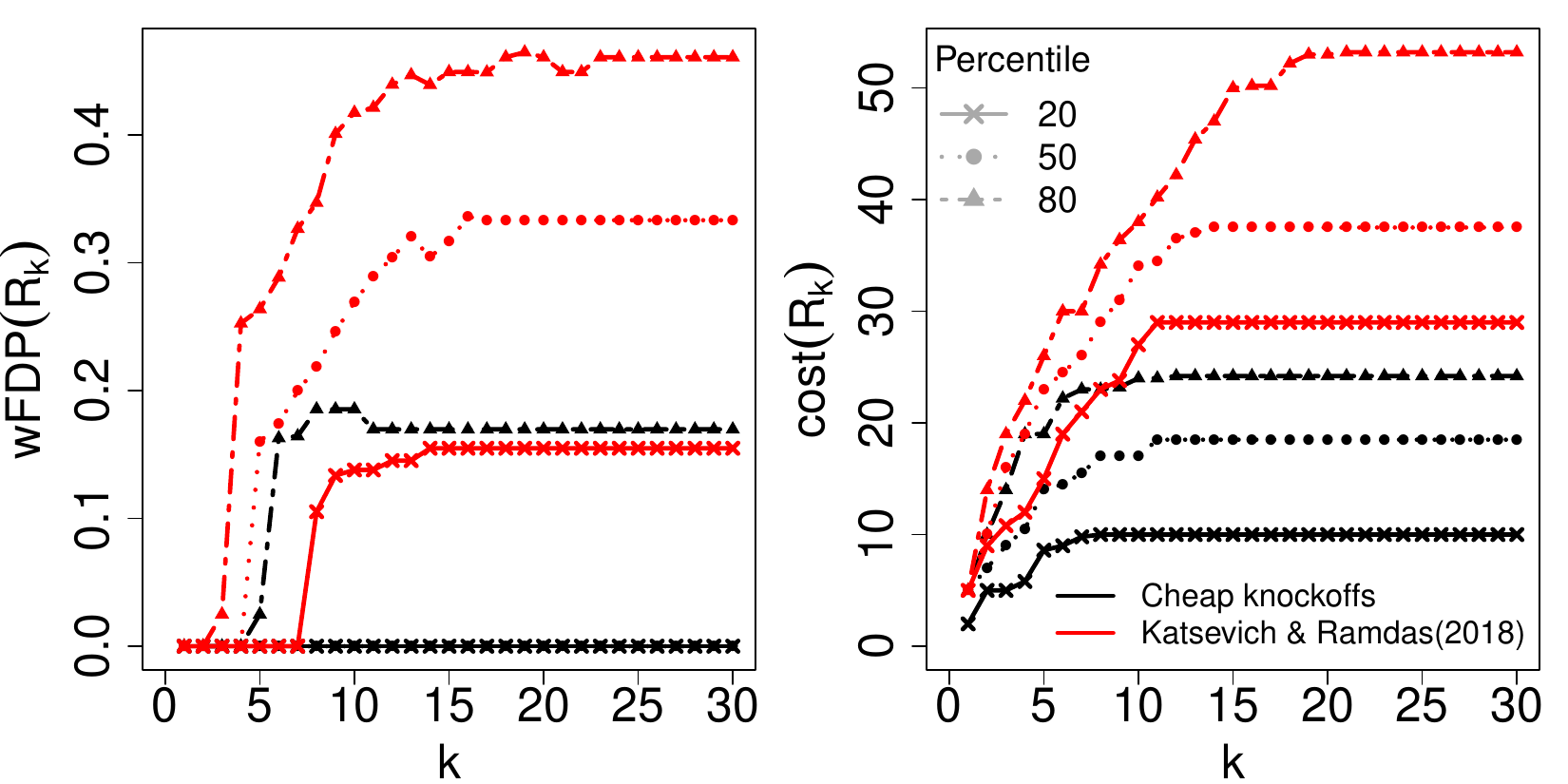}
  \caption{The 20, 50, and 80 percentiles of $\wfdp$ (left panel) and cost (right panel) over 50 non-overlapping data subsets of cheap knockoffs and the procedure in \citet{katsevich2018towards}. \\}
  \label{fig:NHANES_wfdp}
\end{figure}

Although prediction performance of the selected model is not the main theoretical focus of our proposal, we next study the prediction performance and the total cost of the selected variables. For comparison, we consider the following methods:
\begin{enumerate}
  \item \textbf{Katsevich \& Ramdas(2018)}: the proposal of \citet{katsevich2018towards} applied to the `Selective and adaptive SeqStep' method. It is equivalent to our method if we ignore the cost information, i.e., we set $\omega_1 = \omega_2 = ... = \omega_{30} = 2$.
  \item \textbf{Logistic regression}: logistic regression applied to all 30 features. This procedure is not cost-conscious, and does not perform features selection. We use this as a benchmark for classification performance.
\end{enumerate}
We run these methods on all 92062 observations.
Given the large sample size, we expect training error to be a good approximation of the generalization error. 
Furthermore, to highlight the effects of feature costs, we consider exaggerating the feature costs by using the squares of their actual costs.
\begin{figure}[H]
  \centering
  \includegraphics[width=\textwidth]{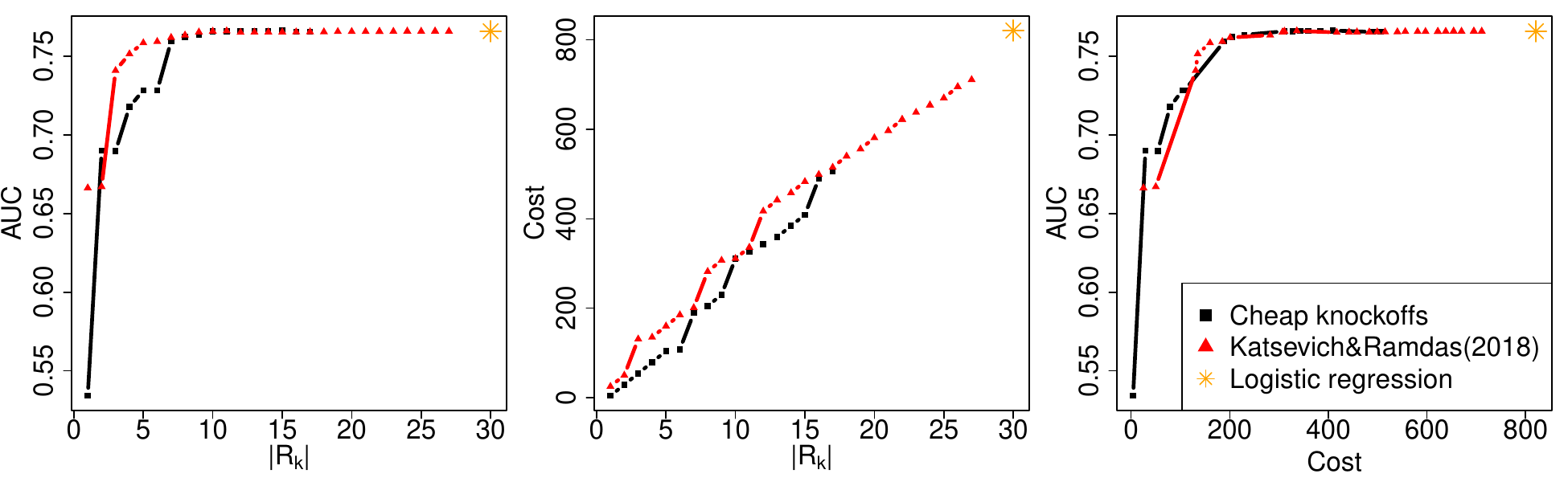}
  \caption{\emph{Left}: The classification performance (in terms of the area under the ROC curve) for different sizes of the selected model $\mathcal{R}_k$ ($k = 1, \ldots, 30$). \emph{Center}: The total cost for different sizes of the selected model.
  \emph{Right}: The classification performance versus the cost of the selected model.
In all three panels of this figure, we consider the squared costs to highlight the effects of feature costs.}
  \label{fig:NHANES}
\end{figure}
From Figure~\ref{fig:NHANES}, we see that cheap knockoffs can achieve favorable classification performance at a low feature cost.
In particular, the first two panels show that for a fixed model size, cheap knockoffs tends to achieve slightly worse classification performance than the procedure of \citet{katsevich2018towards}, which is not cost-conscious. 
However, our method achieves this classification performance at a lower cost.
The right panel shows that for a given model cost, our method can obtain favorable classification performance compared with the proposal of \citet{katsevich2018towards}.
Moreover, our method's classification performance is close to the benchmark of logistic regression, while using a much cheaper set of features. 

In Figures~\ref{fig:path} and \ref{fig:path_exp}, 
we show the path of variables selected by cheap knockoffs and that of \citet{katsevich2018towards}.
Each point represents a variable added to a model (with the feature name in the legend).
For example, we see that both methods include \texttt{Gender}, \texttt{Height}, \texttt{Weight}, and \texttt{Triglyceride} when the model size is 4. 
However, the cheap knockoff procedure tends to select cheaper features first, adding the expensive laboratory feature \texttt{Triglyceride} last among these four features.
By comparison, the proposal of \citet{katsevich2018towards} does not show any preference for inexpensive features.
For the model with two variables, cheap knockoffs selects \texttt{Gender} and \texttt{Height}, which has lower cost and better classification performance than the model of \texttt{Height} and \texttt{Weight} selected by \citet{katsevich2018towards}.

In addition, in Figure~\ref{fig:path_exp}, we present the path of variables selected by cheap knockoffs applied with squared feature costs, where squaring has been performed to exaggerate the effect of the feature costs. 
Comparing with Figure~\ref{fig:path}, we see that cheap knockoffs tends to select less expensive features, while still attaining comparable classification performance. In particular, when the costs are squared, cheap knockoffs no longer selects \texttt{Diastolic BP(2nd)}, \texttt{Systolic BP(4th)}, \texttt{Systolic BP(1st)}, \texttt{Diastolic BP(3rd)}, \texttt{Vigorous activity}, and \texttt{Upper leg length}. Among these omitted variables, only \texttt{Upper leg length} is considered relevant by the logistic regression (see Appendix \ref{app:truth}).

\begin{figure}
  \centering
  \includegraphics[width=0.7\linewidth]{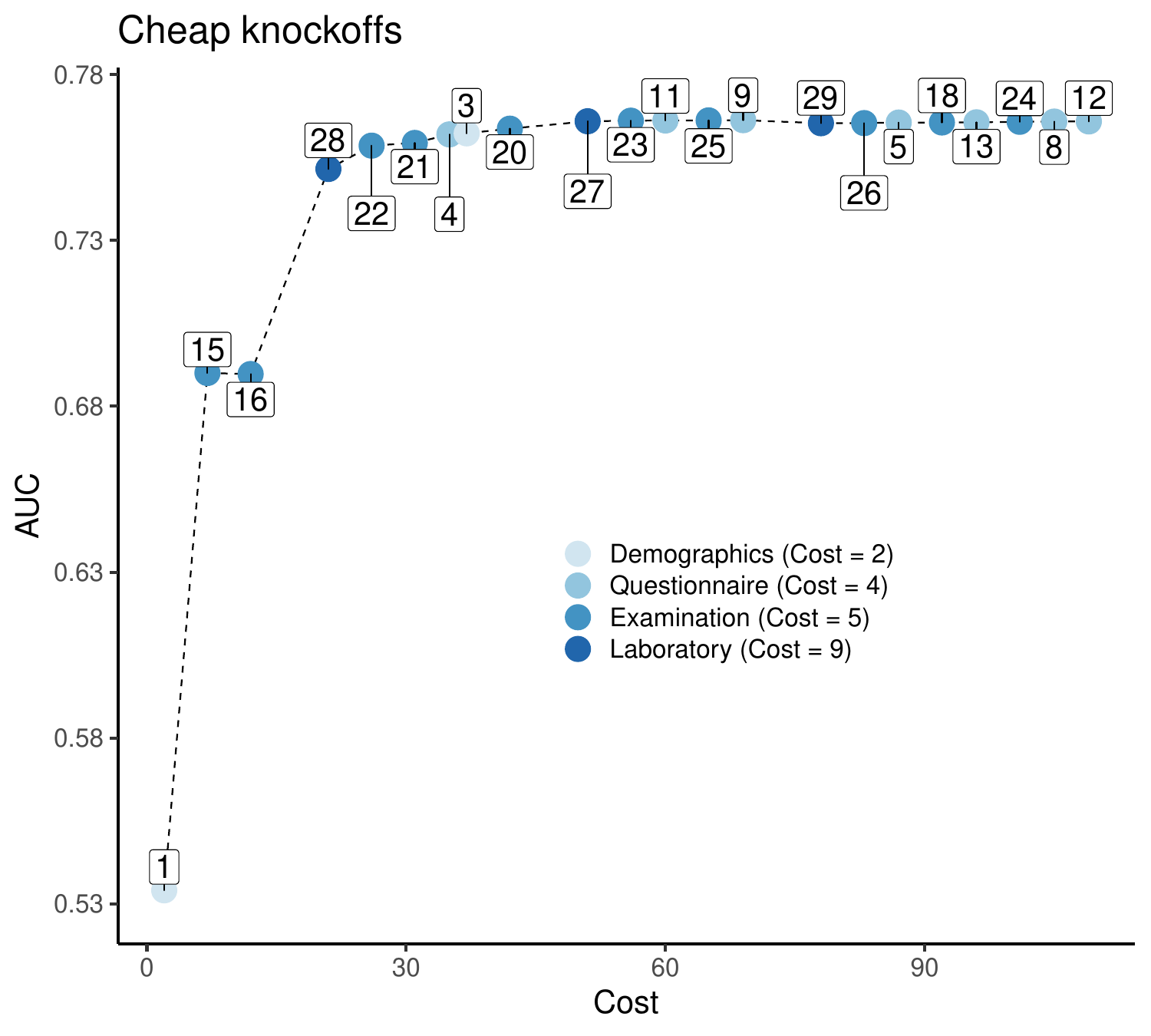}
  \includegraphics[width=.7\linewidth]{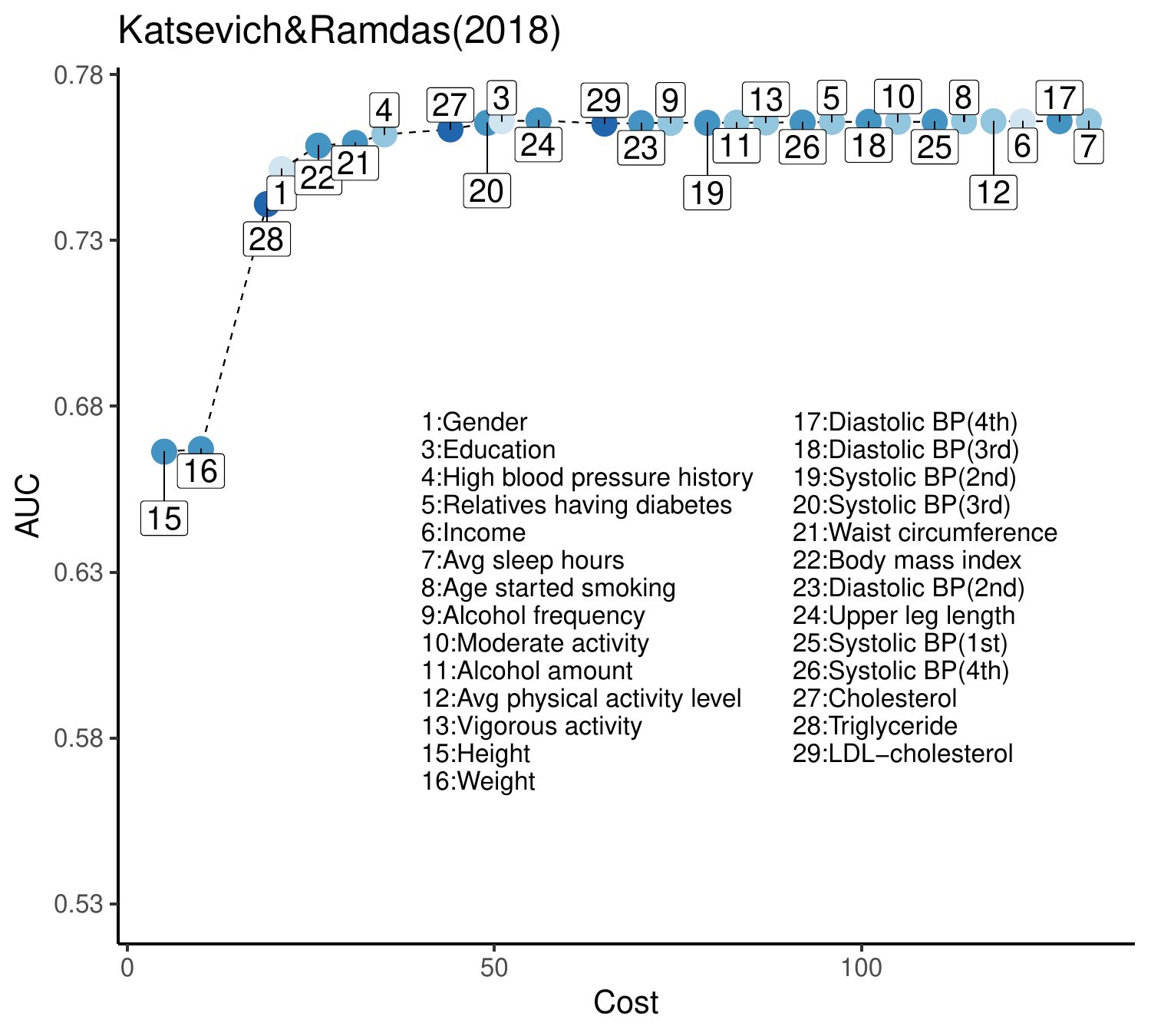}
  \caption{The path of variables selected by cheap knockoffs (top) and the proposal of \citet{katsevich2018towards} (bottom). Each point represents a newly selected feature in the model. Variable indices are ordered from cheapest to most expensive.}
  \label{fig:path}
\end{figure}
\begin{figure}
  \centering
  \includegraphics[width=.7\linewidth]{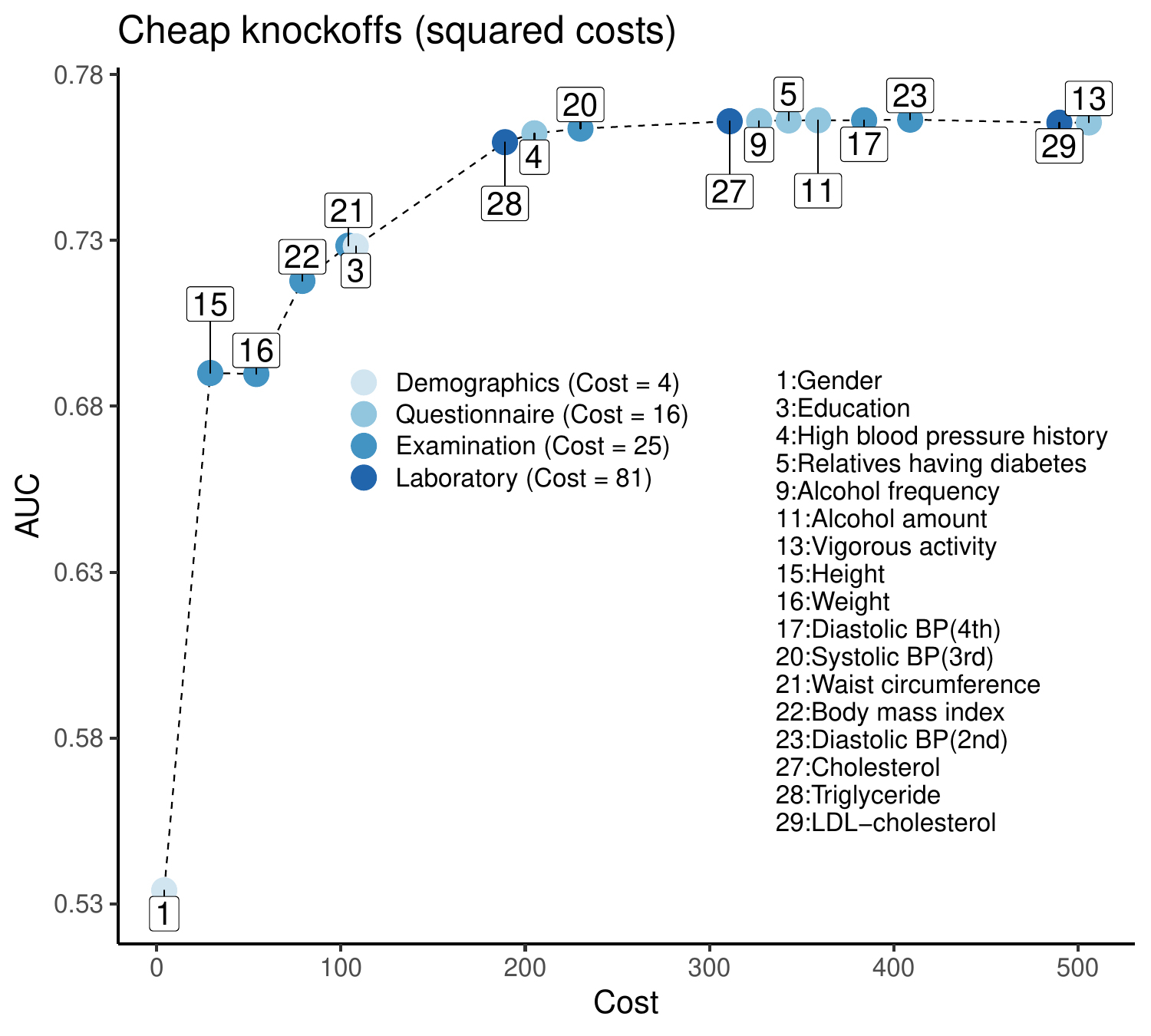}
  \caption{The path of variables selected by cheap knockoffs, with squared costs. Each point represents a newly selected feature in the model. Variable indices are ordered from cheapest to most expensive.}
  \label{fig:path_exp}
\end{figure}

\section{Discussion}
In this paper, we proposed cheap knockoffs, a procedure for performing feature selection when features have costs. Cheap knockoffs is based on the idea of constructing multiple knockoffs for each feature.
In particular, cheap knockoffs forces more expensive features to compete with more knockoffs, making it harder for expensive features to be selected.
Our method yields a path of selected feature sets, and we show that the weighted false discovery proportion is simultaneously bounded with high probability along this path.

An interesting yet challenging future research direction is to develop a method based on the multiple knockoffs idea that provably controls the weighted false discovery rate. The martingale-type arguments used in the original knockoff paper rely on certain symmetries that are broken when the numbers of knockoffs constructed for different features are not all equal.

Finally, an \texttt{R} package named \texttt{cheapknockoff}, implementing our proposed method, is available on \url{https://github.com/hugogogo/cheapknockoff}. 
The simulation studies in Section \ref{sec:simulation} use the \texttt{simulator} package \citep{2016arXiv160700021B}, and the code to reproduce the simulation results (in Section \ref{sec:simulation}) and the NHANES data analysis (in Section \ref{sec:data}) is available at \url{https://github.com/hugogogo/reproducible/tree/master/cheapknockoff}. 
The NHANES dataset \citep{nhanes} is processed in \citet{kachuee2019opportunistic, kachuee2019cost} and is available at \url{https://github.com/mkachuee/Opportunistic}. 

\section{Acknowledgments}
\label{sec:acknowledgments}
The authors thank Will Fithian for suggesting the simultaneous inference framework.
All authors were supported by NIH Grant R01GM123993. Jacob Bien was also supported by NSF CAREER
Award DMS-1653017 and Daniela Witten was also partially supported by NIH Grant DP5OD009145,
NSF CAREER Award DMS-1252624, and a Simons Investigator Award in Mathematical Modeling of Living Systems.
\bibliographystyle{plainnat}
\bibliography{short.bib}

\newpage
\appendix
\appendixpage
\section{NHANES dataset: significant features in logistic regression} \label{app:truth}
In the order of increasing $p$-values (smaller than 0.01 / 30):
\begin{table}[H]
  \centering
  \begin{tabular}{ll}
    \hline
    Name & $p$-value \\
    \hline
    Gender & $1.73 \times 10^{-262}$ \\
    Triglyceride & $5.92 \times 10^{-214}$ \\
    Height & $1.17 \times 10^{-184}$ \\
    Weight & $1.98 \times 10^{-102}$ \\
    Waist circumference & $4.09 \times 10^{-37}$ \\
    Body mass index & $4.02 \times 10^{-31}$ \\
    High blood pressure history & $1.51 \times 10^{-27}$ \\
    Cholesterol & $4.92 \times 10^{-24}$ \\
    Education & $8.16 \times 10^{-10}$ \\
    Upper leg length  & $3.17 \times 10^{-5}$ \\
    Systolic BP(3rd)  & $1.01 \times 10^{-4}$\\
    \hline
  \end{tabular}
\end{table}

\section{Running time comparison in numerical studies} \label{app:time}
\begin{table}[H]
  \centering
  \begin{tabular}{ c | c | c | c | c | c}
    \hline
    $\gamma$ & 0 & 0.25 & 0.5 & 0.75 & 1\\ \hline
    Cheap knockoffs (our proposal) & 2.796 & 2.772 & 2.784 & 2.798 & 2.812\\ \hline
    \citet{katsevich2018towards}  & 0.273 & 0.250 & 0.258 & 0.251 & 0.253\\
    \hline
  \end{tabular}
  \caption{Wall-clock time comparison (in seconds, averaged over 100 simulated datasets) between our proposal and \citet{katsevich2018towards} in generating Table~\ref{tab:ratio}. }
  \label{tab:time_simu}
\end{table}
\begin{table}[H]
  \centering
  \captionsetup{type=table} 
  \begin{tabular}{ c | c }
    \hline
    Cheap knockoffs (our proposal) & 7.284 \\ \hline
    \citet{katsevich2018towards}  & 2.678 \\
    \hline
  \end{tabular}
  \caption{Wall-clock time comparison (in seconds, averaged over 50 non-overlapping data subsets) between our proposal and \citet{katsevich2018towards} in generating Figure~\ref{fig:NHANES_wfdp}. }
  \label{tab:time_data}
\end{table}

\section{Properties of multiple knockoffs} \label{app:mkv}
We study the properties of the multiple knockoffs constructed in Step 1 of Section \ref{sec:mknockoffs}. 
Define 
\begin{align}
  \tilde{Z} = \left(\tilde{X}_1^{(2)}, \ldots, \tilde{X}_1^{(\omega_1)}, \tilde{X}_2^{(2)}, \ldots, \tilde{X}_2^{(\omega_2)}, \ldots, \tilde{X}_p^{(2)}, \ldots, \tilde{X}_p^{(\omega_p)}\right)^T \in \real^{\sum_j(\omega_j - 1)}
  \nonumber
\end{align}
as the random vector of all knockoff features, and 
\begin{align}
  Z = \left(\tilde{X}_1^{(1)}, \tilde{X}_1^{(2)}, \ldots, \tilde{X}_1^{(\omega_1)}, \tilde{X}_2^{(1)}, \tilde{X}_2^{(2)}, \ldots, \tilde{X}_2^{(\omega_2)}, \ldots, \tilde{X}_p^{(1)}, \tilde{X}_p^{(2)}, \ldots, \tilde{X}_p^{(\omega_p)}\right)^T \in \real^{\sum_j\omega_j},
  \label{eq:Z}
\end{align}
where $\tilde{X}_j^{(1)} = X_j$ is the original feature for $j = 1, \ldots, p$. 
For any $p$-tuple of permutations $\varsigma = (\varsigma_1, \ldots, \varsigma_p)$ where $\varsigma_j$ is a permutation on the set $\{1, \ldots, \omega_j\}$, and for any vector $v = (v_1^{(1)}, \ldots, v_1^{(\omega_1)}, \ldots, v_p^{(1)}, \ldots, v_p^{(\omega_p)}) \in \real^{\sum_j \omega_j}$, we define
\begin{align}
  v_{\swap(\varsigma)} = \left( v_1^{(\varsigma_1(1))}, \ldots, v_1^{(\varsigma_1(\omega_1))}, v_2^{(\varsigma_2(1))}, \ldots, v_2^{(\varsigma_2(\omega_2))}, \ldots, v_p^{(\varsigma_p(1))}, \ldots, v_p^{(\varsigma_p(\omega_p))} \right)^T \in \real^{\sum_j \omega_j}.
  \nonumber
\end{align}
Therefore, $Z_{\swap(\varsigma)}$ denotes the random vector where each $\varsigma_j$ permutes the $\omega_j$ knockoff features (including the original one) corresponding to $X_j$.

We generalize the definition of multiple model-X knockoffs \citep[Definition 3.2 in][]{2018arXiv181011378R} to our setting in which each feature can have a different number of knockoffs:
\begin{definition}
  Consider any cost vector $\omega = (\omega_1, \ldots, \omega_p)$, where $\omega_j > 1$ are integers. The random vector $\tilde{Z}$ is a valid $\omega$-knockoff of $X = (X_1, \ldots, X_p)$ if 
  \begin{enumerate}
    \item $Z_{\swap(\varsigma)}$ and $Z$ are identically distributed for any tuple of permutations $\varsigma = (\varsigma_1, \ldots, \varsigma_p)$;
    \item $\tilde{Z}$ and $Y$ are conditionally independent given $X$. 
  \end{enumerate}
\end{definition}
Under the assumption that $X$ follows a multivariate Gaussian distribution, it can be verified \citep[see, e.g., Proposition 3.4 in][]{2018arXiv181011378R} that following Step 1 in Section \ref{sec:mknockoffs}, the vector $\tilde{Z}$ is a valid $\omega$-knockoff of $X$. In particular, the second property is guaranteed provided that the construction of $\tilde{Z}$ does not use $Y$, as in \citet{2018arXiv181011378R}.

The next lemma states the exchangeability property of the irrelevant features and their knockoffs, i.e., we can permute an irrelevant feature and its knockoffs without changing the joint distribution of $Z$ and $Y$.
\begin{lemma}[Exchangeability of irrelevant features and their knockoffs] \label{lem:exchange} Consider any tuple of permutations $\varsigma = (\varsigma_1, \ldots, \varsigma_p)$, where $\varsigma_j$ is the identity permutation for $j \notin \H_0$, and $\varsigma_j$ is an arbitrary permutation over the set $\{1, \ldots, \omega_j\}$ for $j \in \H_0$. If $\tilde{Z}$ is a valid $\omega$-knockoff of $X$, then $(Z, Y)$ and $(Z_{\swap(\varsigma)}, Y)$ are identically distributed.
\end{lemma}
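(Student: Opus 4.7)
The plan is to establish pointwise equality of joint densities $f_{Z, Y}(\cdot, \cdot) = f_{Z_{\swap(\varsigma)}, Y}(\cdot, \cdot)$ by a short change-of-variables argument that uses both defining properties of a valid $\omega$-knockoff together with the definition of $\H_0$.

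First I would observe that $z \mapsto z_{\swap(\varsigma)}$ is a coordinate permutation of $\real^{\sum_j \omega_j}$, hence a measure-preserving bijection whose inverse is $z \mapsto z_{\swap(\varsigma^{-1})}$. A change of variables yields $f_{Z_{\swap(\varsigma)}, Y}(z, y) = f_{Z, Y}(z_{\swap(\varsigma^{-1})}, y)$, which I would factor as $f_Z(z_{\swap(\varsigma^{-1})}) \cdot f_{Y|Z}(y \mid z_{\swap(\varsigma^{-1})})$. By Property~1 of a valid $\omega$-knockoff (swap-invariance of the distribution of $Z$), the marginal factor simplifies immediately to $f_Z(z)$.

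Next I would handle the conditional factor using Property~2 ($\tilde{Z} \perp Y \mid X$), which implies that $f_{Y|Z}(y \mid z)$ depends on $z$ only through its ``original'' coordinates $(z_1^{(1)}, \ldots, z_p^{(1)})$. Thus $f_{Y|Z}(y \mid z_{\swap(\varsigma^{-1})}) = f_{Y|X}(y \mid z_1^{(\varsigma_1^{-1}(1))}, \ldots, z_p^{(\varsigma_p^{-1}(1))})$. For $j \notin \H_0$, the hypothesis $\varsigma_j = \mathrm{id}$ forces $\varsigma_j^{-1}(1) = 1$, so the $j$-th conditioning argument is already $z_j^{(1)}$. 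For $j \in \H_0$, the linear model \eqref{eq:model} has $\beta_j = 0$, so $f_{Y|X}(y \mid x)$ does not depend on $x_j$. Combining these two facts gives $f_{Y|Z}(y \mid z_{\swap(\varsigma^{-1})}) = f_{Y|Z}(y \mid z)$, and hence $f_{Z_{\swap(\varsigma)}, Y}(z, y) = f_{Z,Y}(z, y)$ as required.

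The only delicate step is the invariance of $f_{Y|X}$ under simultaneously changing all coordinates indexed by $\H_0$. The paper's definition of $\H_0$ is stated per-feature (pairwise conditional independence), which in general does not imply joint conditional independence of $Y$ from $(X_j)_{j \in \H_0}$ given $(X_j)_{j \notin \H_0}$. In the linear model \eqref{eq:model} treated in Section~\ref{sec:mknockoffs}, the joint statement follows at once from $\beta_j = 0$ for all $j \in \H_0$, and an analogous structural fact holds for the generalized linear model variants (e.g.,\ logistic regression) considered elsewhere in the paper, so this causes no real difficulty in the settings of interest.
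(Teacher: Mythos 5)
Your overall decomposition --- factoring $f_{Z,Y}$ into $f_Z$ times $f_{Y\mid Z}$, disposing of the first factor with Property~1, and reducing the second to $f_{Y\mid X}$ evaluated at the permuted ``original'' coordinates via Property~2 --- is exactly the route the paper takes: its one-line proof handles the marginal of $Z$ the same way and defers the conditional step to the argument for Lemma~1 of \citet{candes2018panning}. Your change-of-variables step is correct, as is the observation that after the swap only the coordinates $z_j^{(\varsigma_j^{-1}(1))}$ for $j \in \H_0$ can differ from the original arguments of $f_{Y\mid X}$.

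The gap is in how you close the argument for those coordinates. As you yourself note, the lemma's hypothesis is the per-feature definition of $\H_0$ (namely $Y \perp X_j \mid X_{-j}$ separately for each $j \in \H_0$), which does not imply that $f_{Y\mid X}$ is invariant under simultaneously changing \emph{all} coordinates indexed by $\H_0$; your patch --- invoking $\beta_j = 0$ in the linear model --- therefore proves the lemma only under an additional structural assumption, whereas the statement (and its later use, e.g.\ for the $\ell_1$-penalized logistic regression in Section~\ref{sec:data}) is meant to hold for an arbitrary conditional law of $Y$ given $X$. The missing idea is the one-feature-at-a-time induction underlying Lemma~1 of \citet{candes2018panning}: first prove the claim for a tuple that is non-identity at a \emph{single} $j \in \H_0$, which requires only $Y \perp X_j \mid X_{-j}$ and the swap-invariance of $Z$; this yields that the singly-swapped pair $(Z', Y)$ is equal in distribution to $(Z, Y)$, so it inherits every distributional property of the original (validity of the $\omega$-knockoff and the null property for the remaining features), and the single-feature step can then be applied to the next index of $\H_0$. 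Composing these single-feature swaps realizes the full tuple $\varsigma$ without ever invoking joint conditional independence. With that replacement your proof is complete and matches the paper's at the stated level of generality.
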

\begin{proof}
  By the property of a valid $\omega$-knockoff, $Z_{\swap(\varsigma)}$ and $Z$ are identically distributed. So it is left to show that $Y | Z$ and $Y | Z_{\swap(\varsigma)}$ are identically distributed. This can be shown using the same arguments as in the proof of Lemma 1 in \citet{candes2018panning}.
\end{proof}

We denote
\begin{align}
  T = \left( T_1^{(1)}, \ldots, T_1^{(\omega_1)}, T_2^{(1)}, \ldots, T_2^{(\omega_2)}, \ldots, T_p^{(1)}, \ldots, T_p^{(\omega_p)} \right) \in \real^{\sum_j \omega_j},
  \nonumber
\end{align}
for $T_j^{(\ell)}$ defined in Step 2 of Section \ref{sec:mknockoffs}.
Furthermore, we define component-wise order statistics on $T$,
\begin{align}
  T_{\mathrm{ordered}} =\left( T_{1, (1)}, \ldots, T_{1, (\omega_1)}, T_{2, (1)}, \ldots, T_{2, (\omega_2)}, \ldots, T_{p, (1)}, \ldots, T_{p, (\omega_p)} \right) \in \real^{\sum_j \omega_j}
  \nonumber
\end{align}
such that $T_{j, (1)} \geq T_{j, (2)} \geq \ldots \geq T_{j, (\omega_j)}$ for all $j$. 

The following lemma characterizes the multiple knockoff statistics $\{\kappa_j\}_{j = 1}^p$ computed in Step 2 of Section \ref{sec:mknockoffs}.
It essentially states that for $j \in \H_0$, the statistics $\kappa_j$ corresponding to the irrelevant feature $X_j$ is uniformly distributed on the set $\{1, \ldots, \omega_j\}$, and is independent of the statistics corresponding to all other features and the component-wise order statistics $T_{\mathrm{ordered}}$.
This property generalizes the ``coin-flip'' property of the standard model-X knockoff \citep[see, e.g., Lemma 2 in][]{candes2018panning}, and is the key to the proof of Theorem \ref{thm:spotting}.
\begin{lemma}[Multiple knockoff statistics] \label{lem:uniform}
  Suppose $\tilde{Z}$ is a valid $\omega$-knockoff of $Z$. For any $j \in \H_0$, the statistic $\kappa_j$ is uniformly distributed on the set $\{1, \ldots, \omega_j\}$, and is independent of $\{\kappa_k\}_{k \neq j}$ and the order statistics $T_{\mathrm{ordered}}$.
\end{lemma}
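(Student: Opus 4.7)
The plan is to combine the exchangeability property from Lemma \ref{lem:exchange} with the fact that the statistic-computing function is equivariant under permutations of knockoff copies within a feature. Fix $j \in \H_0$ and an arbitrary permutation $\varsigma_j$ on $\{1, \ldots, \omega_j\}$, and let $\varsigma = (\mathrm{id}, \ldots, \mathrm{id}, \varsigma_j, \mathrm{id}, \ldots, \mathrm{id})$ be the tuple that permutes only the $\omega_j$ copies of feature $j$. Lemma \ref{lem:exchange} gives $(Z, Y) \stackrel{d}{=} (Z_{\swap(\varsigma)}, Y)$, so any deterministic function of these two pairs yields equal-in-distribution outputs.

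Next, I would argue that the lasso objective \eqref{eq:lassom} (and, by the same reasoning, the $\ell_1$-penalized logistic regression variant used in Section \ref{sec:data}) is symmetric in the set of $(\mathrm{column}, \mathrm{coefficient})$ pairs, in the sense that permuting the columns of the augmented design matrix $[\tilde{\mathbf{X}}_1^{(1)}, \ldots, \tilde{\mathbf{X}}_p^{(\omega_p)}]$ simply permutes the corresponding entries of the minimizer. Writing $\tilde{T}$ for the statistics computed from $(Z_{\swap(\varsigma)}, Y)$, this equivariance yields $\tilde{T}_j^{(\ell)} = T_j^{(\varsigma_j(\ell))}$ for all $\ell$, together with $\tilde{T}_k^{(\ell)} = T_k^{(\ell)}$ for every $k \neq j$. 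Under a standard no-ties assumption (which holds almost surely when $X$ has a continuous joint distribution), taking argmax gives $\tilde{\kappa}_j = \varsigma_j^{-1}(\kappa_j)$ and $\tilde{\kappa}_k = \kappa_k$ for $k \neq j$; moreover $\tilde{T}_{\mathrm{ordered}} = T_{\mathrm{ordered}}$ since within-group sorting is permutation-invariant.

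Combining the two observations gives the distributional identity
\begin{align*}
\bigl(\kappa_j, \{\kappa_k\}_{k \neq j}, T_{\mathrm{ordered}}\bigr)
\stackrel{d}{=}
\bigl(\varsigma_j^{-1}(\kappa_j), \{\kappa_k\}_{k \neq j}, T_{\mathrm{ordered}}\bigr),
\end{align*}
valid for every permutation $\varsigma_j$. From here I would conclude by a standard invariance argument: the conditional distribution of $\kappa_j$ given $(\{\kappa_k\}_{k \neq j}, T_{\mathrm{ordered}})$ is invariant under the full symmetric group on $\{1, \ldots, \omega_j\}$, and the only such distribution is the uniform one. Since the conditional distribution does not depend on the conditioning variables, it follows that $\kappa_j$ is independent of $(\{\kappa_k\}_{k \neq j}, T_{\mathrm{ordered}})$ and uniformly distributed marginally.

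The main obstacle I anticipate is cleanly formulating and justifying the equivariance of the statistic-computing map, since the lemma is stated for general $T_j^{(\ell)}$ rather than only the specific lasso choice. For \eqref{eq:lassom} this is immediate because the objective is a sum over $(j, \ell)$-indexed terms that is invariant under any reindexing of those terms; a brief argument noting this symmetry of the penalized loss, together with uniqueness (or a measurable tie-breaking rule) of the minimizer, should suffice. The rest of the argument is bookkeeping on top of Lemma \ref{lem:exchange}.
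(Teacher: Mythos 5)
Your proposal is correct and follows essentially the same route as the paper's proof: both invoke Lemma \ref{lem:exchange} together with the equivariance of the statistic-computing map $T = f(Z, Y)$ under within-feature permutations to obtain the joint distributional identity (the paper's \eqref{eq:joint}), and then deduce uniformity and independence by specializing to permutations acting only on coordinate $j$. Your treatment is slightly more careful about the equivariance of the lasso objective and about ties, while the paper spells out the final conditional-probability computation via Bayes' rule; these are cosmetic differences only.
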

\begin{proof}
  We adapt the proof idea in B.2 of \citet{2018arXiv181011378R}. Consider any tuple of permutations $\varsigma = (\varsigma_1, \ldots, \varsigma_p)$, where $\varsigma_j$ is the identity permutation for $j \notin \H_0$, and $\varsigma_j$ is an arbitrary permutation over the set $\{1, \ldots, \omega_j\}$ for $j \in \H_0$. We first show that $(\varsigma_1(\kappa_1), \ldots, \varsigma_p(\kappa_p), T_{\mathrm{ordered}})$ has the same distribution as $(\kappa_1, \ldots, \kappa_p, T_{\mathrm{ordered}})$.

  We denote $\varsigma^{-1} = (\varsigma_1^{-1}, \ldots, \varsigma_p^{-1})$ where $\varsigma_j^{-1}$ is the inverse permutation of $\varsigma_j$.
  Recall from Step 2 of Section \ref{sec:mknockoffs}, combined with the definition of $Z$ in \eqref{eq:Z}, that $T = f(Z, Y)$ for some map $f$, and observe that $T_{\swap(\varsigma^{-1})} = f (Z_{\swap(\varsigma^{-1})}, Y)$. So by Lemma \ref{lem:exchange}, we have that $T_{\swap(\varsigma^{-1})}$ and $T$ are identically distributed. For any $k_j \in \{1, \ldots, \omega_j\}$ and $t_{j\ell} \in \real$ for $j = 1, \ldots, p$ and $\ell = 1, \ldots, \omega_j$, we have
  \begin{align}
    &\Prob\left[\bigcap_{j = 1}^p \{\kappa_j = k_j\}, \bigcap_{j = 1}^p \bigcap_{\ell = 1}^{\omega_j} \{T_{j, (\ell)} = t_{j\ell}\}\right] \nonumber\\
    = &\Prob \left[\bigcap_{j = 1}^p \{T_j^{(k_j)} = T_{j, (1)} = t_{j1}\}, \bigcap_{j = 1}^p \bigcap_{\ell = 1}^{\omega_j} \{T_{j, (\ell)} = t_{j\ell}\}\right] \nonumber\\
    =& \Prob \left[\bigcap_{j = 1}^p \{T_j^{(\varsigma_j^{-1}(k_j))} = T_{j, (1)} = t_{j1} \}, \bigcap_{j = 1}^p \bigcap_{\ell = 1}^{\omega_j} \{T_{j, (\ell)} = t_{j\ell}\} \right] \nonumber\\
    =& \Prob\left[\bigcap_{j = 1}^p \{\kappa_j = \varsigma_j^{-1}(k_j)\}, \bigcap_{j = 1}^p \bigcap_{\ell = 1}^{\omega_j} \{T_{j, (\ell)} = t_{j\ell}\}\right] \nonumber\\
    =& \Prob\left[\bigcap_{j = 1}^p \{\varsigma_j(\kappa_j) = k_j\}, \bigcap_{j = 1}^p \bigcap_{\ell = 1}^{\omega_j} \{T_{j, (\ell)} = t_{j\ell}\}\right] \nonumber,
  \end{align}
  where the first and the third equalities hold from the definition of $\kappa_j$'s, the second equality holds because $T_{\swap(\varsigma^{-1})}$ and $T$ are identically distributed, along with the fact that $(T_{\swap(\varsigma^{-1})})_{\mathrm{ordered}} = T_{\mathrm{ordered}}$. 
  Therefore, we have shown that 
  \begin{align}
    (\varsigma_1(\kappa_1), \ldots, \varsigma_p(\kappa_p), T_{\mathrm{ordered}}) \text{  and  } (\kappa_1, \ldots, \kappa_p, T_{\mathrm{ordered}}) \text{ are identically distributed}.
    \label{eq:joint}
  \end{align}
  For any $j \in \H_0$, now we further assume that $\varsigma_k$ is an identity permutation for all $k \neq j$, and $\varsigma_j$ is an arbitrary permutation on the set $\{1, \ldots, \omega_j\}$. The equality in joint distributions \eqref{eq:joint} implies that $\varsigma_j(\kappa_j)$ has the same distribution as $\kappa_j$. Since $\varsigma_j$ is an arbitrary permutation on the set $\{1, \ldots, \omega_j\}$, we have that $\kappa_j$ is uniformly distributed on the set $\{1, \ldots, \omega_j\}$, i.e., 
  \begin{align}
    \Prob(\kappa_j = i) = \omega_j^{-1} \qquad \forall i \in \{1, \ldots, \omega_j\}.
    \label{eq:marginal}
  \end{align}
  Furthermore, for any $i_k \in \{1, \ldots, \omega_k\}$ for $k \neq j$, and $t \in \real^{\sum_\ell \omega_{\ell}}$,
  \begin{align}
    \Prob\left[\varsigma_j(\kappa_j) = i \Big | \bigcap_{k \neq j} \left\{\kappa_k = i_k \right\}, T_{\mathrm{ordered}} = t \right] =&
    \frac{\Prob \left[\varsigma_j(\kappa_j) = i, \bigcap_{k \neq j} \left\{\varsigma_k(\kappa_k) = i_k\right\}, T_{\mathrm{ordered}} = t\right]}{\Prob \left[ \bigcap_{k \neq j} \left\{\kappa_k = i_k\right\}, T_{\mathrm{ordered}} = t \right]} \nonumber\\
    =& \frac{\Prob \left[\kappa_j = i, \bigcap_{k \neq j} \left\{\kappa_k = i_k \right\}, T_{\mathrm{ordered}} = t\right]}{\Prob \left[ \bigcap_{k \neq j} \left\{\kappa_k = i_k\right\}, T_{\mathrm{ordered}} = t \right]} \nonumber\\
    = &\Prob\left[\kappa_j = i \Big | \bigcap_{k \neq j} \left\{\kappa_k = i_k \right\}, T_{\mathrm{ordered}} = t \right],
    \nonumber
  \end{align}
  where the first equality holds from Bayes formula and the fact that $\varsigma_k$ is the identity permutation for all $k \neq j$, and the second equality holds from \eqref{eq:joint}. Therefore, for any $i_k \in \{1, \ldots, \omega_k\}$ for $k \neq j$, and $t \in \real^{\sum_\ell \omega_{\ell}}$, we have that
  \begin{align}
    \Prob \left[\kappa_j = i \Big | \bigcap_{k \neq j} \left\{\kappa_k = i_k \right\}, T_{\mathrm{ordered}} = t\right] = \omega_j^{-1} \qquad \forall i \in \{1, \ldots, \omega_j\}.
    \label{eq:conditional}
  \end{align}
  Combining \eqref{eq:marginal} and \eqref{eq:conditional}, we have that $\kappa_j$ is independent of $\{\kappa_k\}_{k \neq j}$ and $T_{\mathrm{ordered}}$.
\end{proof}

\section{Proof of Theorem \ref{thm:spotting}} \label{eq:proof_spotting}
Without loss of generality, we assume that the ordering in Step 3 of Section \ref{sec:mknockoffs} is such that $\sigma(j) = j$ for $j \in \{1, \ldots, p\}$. Consider 
\begin{align}
  \V (\R_k, c) = \frac{c^{-1}  + \sum_{j}  \indi\left \{ j \notin \R_k \right \}}{\left(\sum_{j} \omega_j \indi \left\{ j \in \R_k  \right\}\right) \vee 1} = \frac{c^{-1}  + \sum_{j = 1}^k  \indi\left \{ \kappa_j > 1 \right \}}{\left(\sum_{j = 1}^k \omega_j \indi \left\{ \kappa_j = 1  \right\} \right) \vee 1}
  \label{eq:wfdphat}
\end{align}
for some constant $c$. Recall that
\begin{align}
  \wfdp (\R_k) = \frac{\sum_{j} \omega_j \indi \left\{ j \in \H_0 \cap \R_k \right\} }{\left(\sum_{j} \omega_j \indi \left\{ j \in \R_k \right\}\right) \vee 1} = \frac{\sum_{j = 1}^k \omega_j \indi \left\{ j \in \H_0 \right\} \indi\left\{ \kappa_j = 1 \right\} }{\left(\sum_{j = 1}^k \omega_j \indi \left\{ \kappa_j = 1 \right\}\right) \vee 1}.
  \nonumber
\end{align}
We have the following key lemma:
\begin{lemma} \label{thm:key}
  Let $\V (\R_k, c)$ be defined as in \eqref{eq:wfdphat}. Then for any $\alpha \in (0, 1)$, there exists $x > 0$ such that
  \begin{align}
    \Prob \left[ \sup_k \frac{\wfdp(\R_k)}{\V (\R_k, c)} \geq x \right] \leq \alpha.
    \label{eq:thm}
  \end{align}
\end{lemma}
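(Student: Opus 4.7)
My plan is to bound the supremum by constructing a non-negative supermartingale and applying Ville's inequality, adapting the martingale strategy of \citet{katsevich2018towards} to the cost-weighted setting.

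\textbf{Step 1 (Reduction to $\H_0$).} The denominator of $\wfdp(\R_k)/\V(\R_k,c)$ equals $1 + c\sum_{j\leq k}\indi\{\kappa_j>1\}$, a sum over all $j\leq k$. Discarding the indices $j\notin\H_0$ can only shrink the denominator, so
$$\frac{\wfdp(\R_k)}{\V(\R_k,c)} \;\leq\; \frac{c\sum_{j\in\H_0,\,j\leq k}\omega_j\indi\{\kappa_j=1\}}{1 + c\sum_{j\in\H_0,\,j\leq k}\indi\{\kappa_j>1\}} \;=\; \frac{cN_{m(k)}}{1 + cD_{m(k)}},$$
where $j_1<j_2<\cdots$ enumerate $\H_0$, $m(k)=|\H_0\cap\{1,\ldots,k\}|$, $N_m=\sum_{i\leq m}\omega_{j_i}\indi\{\kappa_{j_i}=1\}$, and $D_m=\sum_{i\leq m}\indi\{\kappa_{j_i}>1\}$. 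Since the right-hand side only changes when $k\in\H_0$, it suffices to bound $\sup_m cN_m/(1+cD_m)$.

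\textbf{Step 2 (Supermartingale).} By Lemma~\ref{lem:uniform}, the $\{\kappa_{j_i}\}$ are mutually independent with $\kappa_{j_i}$ uniform on $\{1,\ldots,\omega_{j_i}\}$. For a threshold $x>0$ to be chosen, define
$$M_m \;=\; \alpha^{c(D_m - N_m/x)} \;=\; \prod_{i=1}^{m}h_i(\kappa_{j_i}), \qquad h_i(\ell)=\begin{cases}\alpha^{-c\omega_{j_i}/x} & \ell=1,\\ \alpha^c & \ell>1.\end{cases}$$
A direct calculation gives $\E[h_i(\kappa_{j_i})]=\omega_{j_i}^{-1}\alpha^{-c\omega_{j_i}/x} + (1-\omega_{j_i}^{-1})\alpha^c$, and rearranging and taking logs shows this is $\leq 1$ if and only if
$$x \;\geq\; -c\log\alpha\cdot\frac{\omega_{j_i}}{\log\{\omega_{j_i}-(\omega_{j_i}-1)\alpha^c\}}.$$

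\textbf{Step 3 (Ville's inequality).} Choosing $x = -c\log\alpha\cdot\max_{j\in\H_0}\omega_j/\log\{\omega_j-(\omega_j-1)\alpha^c\}$ makes $(M_m)$ a non-negative supermartingale with $M_0=1$, so Ville's inequality gives $\Prob[\sup_m M_m\geq 1/\alpha]\leq\alpha$. Since $\alpha\in(0,1)$, the event $M_m\geq 1/\alpha$ is equivalent to $c(D_m-N_m/x)\leq -1$, i.e., $cN_m/(1+cD_m)\geq x$. Combined with the reduction in Step 1, this yields \eqref{eq:thm} for the chosen $x$.

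\textbf{Main obstacle.} The delicate point is guessing the \emph{right} product supermartingale: one whose single-step constraint $\E[h_i]\leq 1$, when solved for the smallest admissible $x$, produces precisely the logarithmic combination $\omega_j/\log\{\omega_j-(\omega_j-1)\alpha^c\}$ appearing in $\U$. The asymmetric choice—charging an entry ($\kappa_{j_i}=1$) by the cost-weighted factor $\alpha^{-c\omega_{j_i}/x}$ and a non-entry ($\kappa_{j_i}>1$) by the uniform factor $\alpha^c$—is exactly what delivers this scaling and recovers the Katsevich--Ramdas bound in the unweighted case $\omega_j\equiv 2$, $c=1$ noted after Theorem~\ref{thm:spotting}.
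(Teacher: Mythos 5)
Your proposal is correct and follows essentially the same route as the paper's proof: reduce to the null coordinates, build an exponential (product) supermartingale from the uniformity and independence of the null $\kappa_j$'s given by Lemma~\ref{lem:uniform}, apply Ville's inequality, and solve the one-step condition $\E[h_i]\le 1$ for the admissible $x$, arriving at the same value $x=-c\log\alpha\,\max_{j\in\H_0}\omega_j/\log\{\omega_j-(\omega_j-1)\alpha^c\}$. The only difference is presentational: you fix the relation $\exp(-c^{-1}\theta x)=\alpha$ from the outset by writing $M_m=\alpha^{c(D_m-N_m/x)}$, whereas the paper keeps the exponential tilt $\theta$ free and substitutes $x=-c\log\alpha/\theta$ at the end; the two supermartingales are identical.
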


\begin{proof}[Proof of Lemma \ref{thm:key}]
  For any $x > 0$, from \eqref{eq:wfdphat},
  \begin{align}
    &\Prob \left\{ \sup_k \frac{\wfdp(\R_k)}{\V (\R_k, c)} \geq x \right\} \nonumber\\
    =& \Prob \left\{ \sup_k \left(\sum_{j = 1}^k \omega_j \indi \left\{ \kappa_j = 1 \right\} \indi \left\{ j \in \H_0 \right\} -  x \sum_{j = 1}^k  \indi \left \{ \kappa_j > 1 \right \}\right) \geq c^{-1}x  \right\} \nonumber\\
    \leq & \Prob \left\{ \sup_k \left(\sum_{j = 1}^k \omega_j \indi \left\{ \kappa_j = 1 \right\} \indi \left\{ j \in \H_0 \right\} -  x \sum_{j = 1}^k  \indi \left \{ \kappa_j > 1 \right \} \indi \left\{ j \in \H_0 \right\}\right) \geq c^{-1}x  \right\} \nonumber\\
    =& \Prob \left[ \sup_k \exp \left[ \theta \left\{ \sum_{j = 1}^k \omega_j \left( \indi \left\{ \kappa_j = 1 \right\} -  \frac{x}{\omega_j}  \indi \left \{ \kappa_j > 1 \right \} \right) \indi \left\{ j \in \H_0 \right\} \right\} \right] \geq \exp \left( c^{-1}x \theta \right)\right]
    \nonumber
  \end{align}
  for any $\theta > 0$. Define
  \begin{align}
    Z_k = \exp \left[ \theta \left\{ \sum_{j = 1}^k \omega_j \left( \indi \left\{ \kappa_j = 1 \right\} - \frac{x}{\omega_j}  \indi \left \{ \kappa_j > 1 \right \} \right) \indi \left\{ j \in \H_0 \right\} \right\} \right]
  \end{align}
  for $k \geq 1$, and $Z_0 = 1$.
  Next we find a value of $\theta > 0$ such that $\{Z_k\}$ is a super-martingale with respect to a certain filtration $\F_k$. If such a value of $\theta$ exists, then from Ville's maximal inequality for super-martingales \citep{ville1939etude}, we have that
  \begin{align}
    \Prob \left[ \sup_k \frac{\wfdp(\R_k)}{\V (\R_k, c)} \geq x \right] \leq \Prob \left\{ \sup_k Z_k \geq \exp(c^{-1}\theta x) \right\} \leq \frac{\E(Z_0)}{\exp(c^{-1}\theta x)} = \exp(-c^{-1}\theta x).
    \label{eq:ville}
  \end{align}

  So it is left to show that $Z_k$ is a super-martingale with respect to a filtration $\F_k$, where
  $\F_k$ is the $\sigma$-field generated from $\{\kappa_j\}_{j \leq k, j \in \H_0}$.
  First we observe that $Z_k$ is adapted to $\F_k$ for all $k$. By definition of a super-martingale, it is left to show that
  \begin{align}
    \E \left( \frac{Z_k}{Z_{k - 1}} \mid \F_{k - 1} \right) = \E \left[ \exp \left\{ \omega_k \theta \left( \indi \left\{ \kappa_k = 1 \right\} - \frac{x}{\omega_k} \indi \left\{\kappa_k > 1 \right\} \right) \indi \left\{ k \in \H_0 \right\} \right\} \mid \F_{k - 1} \right] \leq 1.
    \nonumber
  \end{align}
  First, we observe that this holds trivially for $k \notin \H_0$. For $k \in \H_0$, we have
  \begin{align}
    \E \left( \frac{Z_k}{Z_{k - 1}} \mid \F_{k - 1} \right) =& \E \left[ \exp \left\{ \omega_k \theta \left( \indi \left\{ \kappa_k = 1 \right\} - \frac{x}{\omega_k} \indi \left\{ \kappa_k > 1 \right\} \right) \right\} \mid \F_{k - 1} \right] \nonumber\\
    =& \E \left[ \indi \left\{ \kappa_k = 1 \right\} \exp \left( \omega_k \theta  \right) \mid \F_{k - 1}\right] + \E \left[ \indi \left\{ \kappa_k > 1 \right\} \exp \left(  -\theta x \right) \mid \F_{k - 1}\right] \nonumber\\
    =& \exp \left( \omega_k \theta \right) \Prob \left( \kappa_k = 1 \mid \F_{k - 1} \right) + \exp \left( - \theta x \right) \Prob \left( \kappa_k > 1 \mid \F_{k - 1} \right) \nonumber\\
    =& \frac{\exp \left( \omega_k \theta \right)}{\omega_k}  + \frac{(\omega_k - 1)\exp \left( - \theta x \right)}{\omega_k} \nonumber, 
  \end{align}
  where the last equality holds from Lemma \ref{lem:uniform}.

  For any fixed $\alpha \in (0, 1)$, take $x = {\theta}^{-1}(- c\log \alpha)$, which is equivalent to $\exp(-c^{-1}\theta x) = \alpha$. Then it remains to select $\theta$ such that for all $k \in \H_0$,
  \begin{align}
    \E \left( \frac{Z_k}{Z_{k - 1}} \mid \F_{k - 1} \right) = \frac{\exp \left( \omega_k \theta \right)}{\omega_k}  + \frac{\omega_k - 1}{\omega_k} \exp \left( c \log \alpha \right) \leq 1,
    \label{eq:toshow}
  \end{align}
  which is satisfied for
  \begin{align}
    \theta \leq  \frac{1}{\omega_k} \log \left\{ \omega_k - \left( \omega_k - 1 \right) \alpha^{c} \right\}.
    \nonumber
  \end{align}
  So we take
  \begin{align}
    \theta^\ast = \min_{k \in \H_0} \frac{1}{\omega_k} \log \left\{ \omega_k - \left( \omega_k - 1 \right) \alpha^{c} \right\}.
    \nonumber
  \end{align}
  Then \eqref{eq:toshow} holds and thus from \eqref{eq:ville}, the theorem holds with 
  \begin{align}
    x = \frac{-c\log \alpha}{\theta^\ast} = -c\log \alpha \left[\max_{k \in \H_0} \frac{\omega_k}{\log \left\{ \omega_k - \left( \omega_k - 1 \right) \alpha^{c} \right\}}\right].
    \label{eq:x}
  \end{align}
\end{proof}

Now we have
\begin{align}
  \U(\R_k, c) = x \V (\R_k, c) = -\log \alpha \left[\frac{1  + \sum_{j = 1}^k  c \indi\left \{ \kappa_j > 1 \right \}}{\left(\sum_{j = 1}^k \omega_j \indi \left\{ \kappa_j = 1  \right\}\right) \vee 1} \right] \left[\max_{k \in \H_0} \frac{\omega_k}{\log \left\{ \omega_k - \left( \omega_k - 1 \right) \alpha^{c} \right\}} \right],
  \nonumber
\end{align}
and the results in Theorem \ref{thm:spotting} follow.
\end{document}